\let\originalleft\left
\let\originalright\right
\renewcommand{\left}{\mathopen{}\mathclose\bgroup\originalleft}
\renewcommand{\right}{\aftergroup\egroup\originalright}
\newcommand{\qed}{\qquad\vbox{\hrule height0.6pt\hbox{%
   \vrule height1.3ex width0.6pt\hskip0.8ex
   \vrule width0.6pt}\hrule height0.6pt
  }}
\newcommand{\EX}[1]{\mathbb{E}\left[#1\right]}
\renewcommand{\Pr}[1]{\mathbb{P}\left[#1\right]}
\definecolor{data_color}{RGB}{121, 222, 242}
\definecolor{alg_color}{RGB}{255,51,76}
\definecolor{thm_color}{rgb}{1.0, 0.6, 0.4}
\definecolor{red}{rgb}{1.0, 0, 0}
\definecolor{gray}{rgb}{0.5,0.5,0.5}
\newcommand{\support}[1]{\textup{supp}\left(#1\right)}
\renewcommand{\vec}[1]{\mathbf{#1}}
\newcommand{\e}{\epsilon}
\newcommand{\mc}{\mathsf{MC}}
\newcommand{\mm}[1]{\mathsf{MM}\left(#1\right)}
\newcommand{\MatchingSkeleton}{\mathsf{MatchingSkeleton}}
\newcommand{\block}{(P_i,\, Q_i, \, \alpha_i)}
\newcommand{\bigraph}{G = (P, \, Q, \, E)}
\newtheorem{definition}{Definition}
\newtheorem{remark}{Remark}
\newenvironment{proofof}[1]{\vspace{\baselineskip}\noindent{\bf Proof of #1:}}{$\qed$\par}
\begin{document}

\title{\Large Communication Efficient Coresets for Maximum Matching}
\author{Michael Kapralov\\EPFL\footnote{E-mail addresses: first.last@epfl.ch}
\and Gilbert Maystre\\EPFL
\and Jakab Tardos\\EPFL}

\date{\today}

\maketitle

\pagestyle{fancy}
\fancyhf{}
\fancyfoot[LE]{\hspace{1.75cm} \hfill \thepage \hfill}
\fancyfoot[RO]{\hfill \thepage \hfill \hspace{1.75cm}}

\setlength{\parindent}{0pt}

\begin{abstract} \small\baselineskip=9pt In this paper we revisit the problem of constructing {\em randomized composable coresets} for bipartite matching.  In this problem the input graph is randomly partitioned across $k$ players, each of which sends a single message to  a coordinator, who then must output a good approximation to the maximum matching in the input graph. Assadi and Khanna \cite{DBLP:conf/spaa/AssadiK17} gave the first such coreset, achieving a $1/9$-approximation by having every player send a maximum matching, i.e. at most $n/2$ words per player. The approximation factor was improved to $1/3$ by Bernstein et al. \cite{DBLP:conf/soda/AssadiBBMS19}.\\

In this paper, we show that the matching skeleton construction of Goel, Kapralov and Khanna \cite{DBLP:conf/soda/GoelKK12},  which is a carefully chosen (fractional) matching, is a randomized composable coreset that achieves a $1/2-o(1)$ approximation  using at most $n-1$ words of communication per player. We also show an upper bound of $2/3+o(1)$ on the approximation ratio achieved by this coreset.\end{abstract}

\section{Introduction}

Composable coresets is a generic technique for the analysis of large data, that has been shown to be effective for a variety of problems. In the context of graphs, the idea is to partition the edges of the input graph into $k$ parts, extract some small yet informative summary of each part, and recombine these summaries into a single graph without losing too much in the quality of the solution (the formal definition is presented in Section~\ref{section:coresets}). The small summary of each part is called the composable coreset. This versatile technique translates simply to algorithms in both the MPC and the randomized streaming models (Section 1.1 in \cite{DBLP:conf/spaa/AssadiK17}).\\

The study of {\it randomized} composable coresets in the context of approximating maximum matching was initiated by~\cite{DBLP:conf/spaa/AssadiK17} as the usefulness of {\it deterministic} composable coresets, where the initial partition of the input is arbitrary, was shown to be limited (see e.g. \cite{DBLP:conf/soda/AssadiKLY16,DBLP:conf/esa/Konrad15}). They proved that a maximum matching coreset, which contains $n/2$ edges, achieves nearly $1/9$ approximation, which was improved to nearly $1/3$ by~\cite{DBLP:conf/soda/AssadiBBMS19}. This paper further showed that the approximation quality of the maximum matching coreset is at best $1/2$, and proposed an alternative: the EDCS coreset. EDCS's achieve a nearly $2/3$ approximation as randomized composable coresets for maximum matching; they are however significantly denser, with size $n\cdot\text{poly}(\epsilon^{-1})\cdot\text{poly}(\log n)$ to achieve a $2/3-\epsilon$ approximation. More recently, the work of ~\cite{DBLP:conf/icml/AssadiBM19} gave a coreset of linear size in $n$ that achieves an approximation ratio of $1/2-\epsilon$ for small $\epsilon>0$, but at the expense of duplicating every edge $\Omega(1/\epsilon)$ times, increasing the communication accordingly. This is again prohibitively expensive for small $\epsilon$.\\

As the main result of this paper, we propose a small composable coreset of size at most $n-1$, which nonetheless achieves a $1/2$ approximation ratio, without the need for the duplication of edges.

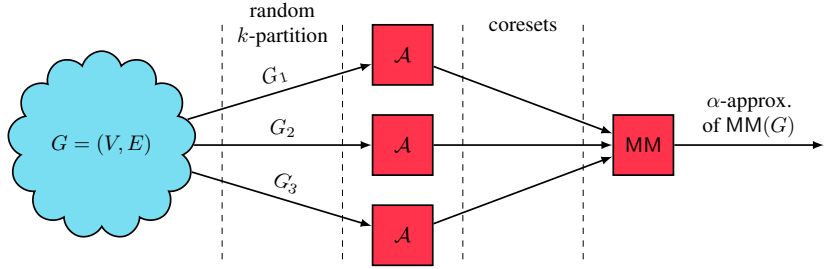
\begin{figure}[ht]
\vskip 0.2in
\begin{center}
\scalebox{0.8}{
\begin{tikzpicture}
\node[cloud, cloud puffs=13, draw, thick, fill=data_color, minimum width=3cm, minimum height=3cm, align=center] (data) at (-2, 0) {$G = (V,E)$};
\node[rectangle, draw, thick, fill=alg_color, minimum width=1cm, minimum height=1cm] (a1) at (3, 1.5) {$\mathcal{A}$};
\node[rectangle, draw, thick, fill=alg_color, minimum width=1cm, minimum height=1cm] (a2) at (3, 0) {$\mathcal{A}$};
\node[rectangle, draw, thick, fill=alg_color, minimum width=1cm, minimum height=1cm] (a3) at (3, -1.5) {$\mathcal{A}$};

\node[rectangle, draw, thick, fill=alg_color, minimum width=1cm, minimum height=1cm] (mm) at (7, 0) {$\mathsf{MM}$};

\path[draw, -latex, thick] (data) -- (a1) node [midway, above, sloped] {$G_{1}$};
\path[draw, -latex, thick] (data) -- (a2) node [midway, above, sloped] {$G_{2}$};
\path[draw, -latex, thick] (data) -- (a3) node [midway, above, sloped] {$G_{3}$};
\path[draw, -latex, thick] (a1) -- (mm);
\path[draw, -latex, thick] (a2) -- (mm);
\path[draw, -latex, thick] (a3) -- (mm);
\path[draw, -latex, thick] (mm) -- (10, 0) node [midway, above, sloped, align=center] {$\alpha$-approx.\\ of $\mm{G}$};

\path[draw, dashed] (0, 2) -- (0, -2);
\path[draw, dashed] (2, 2) -- (2, -2);
\node[rectangle, align=center, minimum width=2, minimum height=1] at (1, 2) {random\\$k$-partition};

\path[draw, dashed] (4, 2) -- (4, -2);
\path[draw, dashed] (6, 2) -- (6, -2);
\node[rectangle, align=center, minimum width=2, minimum height=1] at (5, 2) {coresets};
\end{tikzpicture}}
\caption{A visual representation of randomized composable coresets. $G$ is first partitioned into $k$ parts randomly. Then each of those parts is reduced into coresets independently using an algorithm $\mathcal{A}$. A maximum matching  is then computed amongst the recombination of all the coresets.}
\label{figure:coresetconcept}
\end{center}
\vskip -0.2in
\end{figure}

\begin{theorem}
There exists a $1/2-o(1)$-approximate randomized composable coreset with size $n-1$ for bipartite maximum matching as long as $k=\omega(1)$, $\mm{G}=\omega(k\log n)$.
\end{theorem}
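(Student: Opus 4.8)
The plan is to reduce the statement to a concentration‑of‑measure argument together with a structural description of the matching skeleton, and to push the real difficulty into the latter. Throughout write $\mu=\mm{G}$, fix a maximum matching $M^\ast$ of $G$ with $|M^\ast|=\mu$, let $G_1,\dots,G_k$ be the random $k$‑partition, and let $\mathcal{M}_i$ denote the matching skeleton of $G_i$. Since the skeleton is supported on a forest inside each of its blocks it has at most $n-1$ edges, which is the promised per‑player coreset size, and the coordinator outputs $\mm{H}$ where $H:=\bigcup_{i=1}^{k}\mathcal{M}_i$; it therefore suffices to show $\mm{H}\ge(1/2-o(1))\,\mu$ with high probability. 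The routine ingredient is concentration: each edge of $G$ is sent to a uniformly random part, so for every $i$ the set $M^\ast_i:=M^\ast\cap E(G_i)$ has $\EX{|M^\ast_i|}=\mu/k$, and since $\mu=\omega(k\log n)$ we have $\mu/k=\omega(\log n)$; a Chernoff bound together with a union bound over the $k=\omega(1)$ parts yields $|M^\ast_i|=(1\pm o(1))\mu/k$ for all $i$ simultaneously, and the same device controls every other quantity that is a sum of $\Theta(\mu)$ bounded independent random variables (in particular, below, how the ``happy'' vertices distribute across the parts).

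The structural input I would rely on (from the Goel--Kapralov--Khanna construction) is the internal anatomy of each $\mathcal{M}_i$. Its blocks $(P_j,Q_j,\alpha_j)$ carry strictly increasing densities $\alpha_1<\alpha_2<\cdots$; every edge of $G_i$ joins some $P_j$ to some $Q_{j'}$ with $j'\le j$, so along an edge the density of the block containing the $Q$‑endpoint is at most that of the block containing the $P$‑endpoint; the skeleton fully (fractionally, via a forest) matches $P_j$ in every block with $\alpha_j\ge 1$ and $Q_j$ in every block with $\alpha_j<1$; and, correspondingly, its support contains an integral matching $N_i$ saturating those same sets, with $|N_i|=\mm{G_i}$. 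Two consequences drive the argument. First, the fractional degrees $d_i(\cdot)$ that $\mathcal{M}_i$ assigns satisfy $d_i(p)+d_i(q)\ge 1$ for every edge $(p,q)$ of $G_i$. Second, and equivalently, for each $(p,q)\in M^\ast$, inside the part $G_i$ that holds it, either $p$ sits in a block with $\alpha_j\ge 1$ --- call $p$ \emph{$P$‑happy} --- or, since then the block of $q$ has density $\le\alpha_j<1$, the vertex $q$ sits in a block with density $<1$ --- call $q$ \emph{$Q$‑happy}; in either case the happy vertex is matched inside its own block by $N_i\subseteq\mathcal{M}_i$. Choosing one happy endpoint per edge of $M^\ast$ gives $\mu$ distinct happy vertices (they are the endpoints of a matching), so at least $\mu/2$ of them lie on a single side; say there is a set $A\subseteq P$ of $P$‑happy vertices with $|A|\ge\mu/2$, where each $a\in A$ is matched, within its block, by $N_{i(a)}\subseteq\mathcal{M}_{i(a)}$ and $i(a)$ is the part holding the $M^\ast$‑edge at $a$.

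The remaining task --- and the step I expect to be the main obstacle --- is to certify that $H=\bigcup_i\mathcal{M}_i$ contains an integral matching of size $(1-o(1))|A|\ge(1/2-o(1))\mu$ that essentially saturates $A$. The naive attempt, namely to union the matchings $N_i$ restricted to the happy vertices of part $i$, does not work: the targets picked by different parts can collide, and a priori a set $S\subseteq A$ spread evenly over the $k$ parts could have neighborhood in $H$ as small as $|S|/k$. The plan is instead to build the matching by routing each $a\in A$ toward a $Q$‑vertex of its block inside $\mathcal{M}_{i(a)}$, and to control the \emph{congestion} at every $q\in Q$, i.e.\ the number of parts that route a happy vertex to $q$: the edges of $G$ at $q$ are spread over the $k$ parts, within one part $\mathcal{M}_i$ is a fractional matching so $q$ absorbs at most one unit there, and the $\alpha_j\ge 1$ expansion inside each block supplies the slack needed to reroute the few conflicts that do occur. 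The difficulty is to turn this into a Hall‑type or augmenting‑path argument that loses only a $1+o(1)$ factor rather than an absolute constant, uniformly over all potential obstruction sets $S\subseteq A$; this is precisely where $k=\omega(1)$ and $\mu=\omega(k\log n)$ enter, and everything preceding it is bookkeeping.
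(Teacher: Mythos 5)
There is a genuine gap, and it is not merely the unproven step you flag at the end --- the target statement of that step is false. You reduce the theorem to the claim that $H=\bigcup_i\mathcal{M}_i$ contains a matching of size $(1-o(1))|A|$ essentially saturating the larger happy side $A$. The paper's own lower-bound construction (Section~\ref{section:limitations}, Figure~\ref{figure:zzgraph}) refutes this: there $M^*$ consists of three perfect matchings $P_1$--$Q_1$, $P_2$--$Q_2$, $P_3$--$Q_3$ of size $r$ each, in every part the vertices of $P_1\cup P_2$ lie in blocks with $\alpha\ge1$ while $Q_2\cup P_3\cup Q_3$ has $\alpha\le1$, so your set $A$ is $P_1\cup P_2$ with $|A|=2r\ge\mu/2$. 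Yet every skeleton $\mathcal{M}_i$ can be chosen to contain no $P_2\times Q_2$ edge, so $\Gamma_H(A)\subseteq Q_1$ and $|Q_1|=r+2r/k\approx|A|/2$: no matching in $H$ saturates more than about half of $A$, and this route caps out near $\mu/3$, not $\mu/2$. No congestion or rerouting argument can rescue the claim as stated, because the obstruction is a deterministic Hall violation in $H$ itself, not a collision artifact of the randomness. (A smaller omission: you also drop the case analysis on whether some single part already contains a matching of size $\mu/2$, which the paper needs in order to bound the vertex cover of each part.)

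The paper avoids the recombination problem entirely, and this is the idea your proposal is missing. Rather than matching happy vertices within their own parts and then stitching, it shows that a $1/2-o(1)$ fraction of the edges of $M^*$ \emph{themselves} survive into $\widetilde G$; since $M^*\cap\widetilde G$ is automatically a matching, nothing needs to be combined. The mechanism is: condition on $G_1\setminus\{e\}=H'$ for an edge $e\in M^*\cap G_1$; with high probability $\mm{H'}\le\mu(1/2+2/k)$, so the canonical vertex cover $\Phi$ of $H'$ has that size; by the robustness lemma (Lemma~\ref{lemma:robust}) any edge of $M^*\setminus H'$ not covered by $\Phi$ is \emph{forced} into every matching skeleton of $G_1$ (deleting it would leave the block decomposition unchanged, contradicting the neighborhood structure of Lemma~\ref{lemma:blockproperties}); and conditioned on $H'$ the edge $e$ is uniform over the $\ge\mu(1-2/k)$ edges of $M^*\setminus H'$, of which at most $\mu(1/2+2/k)$ touch $\Phi$. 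Your observation that an uncovered-by-$\Phi$ edge has both endpoints ``unhappy'' is the right raw material, but it must be deployed per edge of $M^*$ via this conditioning argument, not aggregated into a saturation claim about $A$.
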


\subsection*{Intuition behind our construction}
Our coreset is inspired by the previous best known `small' randomized composable coreset, the maximum matching coreset. The main challenge in analyzing the performance of picking any maximum matching as a coreset lies in the fact that graphs that do not admit a perfect matching generally admits many different maximum matchings. To circumvent this, we propose to use \textit{any matching skeleton} (a structure introduced by~\cite{DBLP:conf/soda/GoelKK12}, and later rediscovered by~\cite{DBLP:conf/soda/BernsteinHR18}) as a coreset. This is essentially a carefully chosen `canonical' fractional matching that matches vertices as uniformly as possible (see Section~\ref{section:matchingskeletoncoreset}). Such a fractional matching can always be selected to be supported on a forest by a simple argument similar to the one that establishes the integrality of the bipartite matching polytope, meaning that the support size is never larger than $n-1$. The fact that the coreset is essentially an `optimally spread out' maximum matching leads to a rather simple proof of the approximation ratio of $1/2-o(1)$, which we present in Section~\ref{section:main}. In Section~\ref{section:limitations}, we show that any matching skeleton does not provide a better than $2/3$ approximation to maximum matching, leaving some amount of room for improvement of our approximation ratio bound.

\subsection*{Previous results}
Coresets have been studied in a variety of contexts \cite{DBLP:conf/spaa/AssadiK17,DBLP:conf/kdd/BadanidiyuruMKK14,DBLP:conf/nips/BalcanEL13,DBLP:conf/nips/BateniBLM14,DBLP:conf/pods/IndykMMM14,DBLP:conf/stoc/MirrokniZ15,DBLP:conf/nips/MirzasoleimanKSK13} (also see e.g. \cite{DBLP:conf/soda/AhnGM12,DBLP:conf/pods/AhnGM12,DBLP:conf/soda/AssadiKLY16,DBLP:conf/stoc/BhattacharyaHNT15,DBLP:journals/algorithmica/BulteauFKP16,DBLP:conf/soda/ChitnisCEHMMV16,DBLP:conf/focs/KapralovLMMS14,DBLP:conf/podc/KapralovW14,DBLP:conf/mfcs/McGregorTVV15, DBLP:journals/corr/abs-2007-14204} for the related work in the linear sketching context). Related to our problem, maximum matching approximation has been widely studied in low space regimes such as MPC \cite{DBLP:conf/soda/AssadiBBMS19,DBLP:conf/spaa/HarveyLL18,DBLP:conf/stoc/CzumajLMMOS18,DBLP:conf/spaa/LattanziMSV11,DBLP:conf/icml/AssadiBM19} and streaming \cite{DBLP:conf/approx/McGregor05,DBLP:conf/soda/GoelKK12,DBLP:conf/approx/KonradMM12,DBLP:journals/talg/PazS19,DBLP:conf/spaa/AssadiK17,DBLP:conf/soda/AssadiB19}. In particular \cite{DBLP:conf/soda/AssadiBBMS19} achieves nearly $2/3$ approximate maximum matching in two MPC rounds and $\widetilde O(\sqrt{mn}+n)$ space per machine, using randomized composable coresets of size $O(n\log n)$.

\section{Randomized Composable Coresets}
\label{section:coresets}

\begin{definition}
Let $G = (V, E)$ be a graph and $k \in \mathbb{N}$ and integer. A {\bf random $k$-partition} of $G$ is a set of $k$ random subgraphs $\{G_i=(V,\, E_i)\}_{i \in [k]}$ of $G$, where each edge $e \in E$ is sent uniformly at random to exactly one of the $E_i$.
\end{definition}

\begin{definition}\cite{DBLP:conf/stoc/MirrokniZ15}
\label{definition:coreset}
Let $\mathcal{A}$ be an algorithm that takes as input a graph $H$ and returns a subgraph $\mathcal{A}(H) \subseteq H$. We say that $\mathcal{A}$ outputs an {\bf $\alpha$-approximate randomized composable coreset} for the maximum matching problem if given any graph $G = (V, E)$, any $k \in \mathbb{N}$ and a random $k$-partition of $G$ we have
$$\alpha \cdot \mm{G} \leq \EX{\mm{\mathcal{A}\left(G_1 \right) \cup \dots \cup \mathcal{A}\left(G_k \right)}}$$
where the expectation is taken over the randomness of the partition. The {\bf size} of the coreset is the number of edges returned by $\mathcal{A}$.
\end{definition} 

\begin{remark}
Throughout this paper we will assume some natural bounds on the parameter $k$. Firstly, similarly to \cite{DBLP:conf/spaa/AssadiK17, DBLP:conf/soda/AssadiBBMS19}, we suppose that the maximum matching size of the input graph $\mm{G}=\omega(k\log n)$. This allows us to argue concentration at various places in the analysis, and is a natural assumption: The regime where $\mm{G}$ is smaller is handled in \cite{DBLP:journals/corr/ChitnisCEHMMV15}. We will further make the natural assumption that $k=\omega(1)$, that is we parallelize over a superconstant number of machines.
\end{remark}

\section{Preliminaries and Notation}
Throughout the paper we consider bipartite graphs, denoted by $G = (P,\, Q,\, E)$, where the vertex-sets $P$ and $Q$ are the two sides of the bipartition, and $E$ is the edge-set. We let $n=|P\cup Q|$ denote the number of vertices in $G$ and $m=|E|$ denote the number of edges. For a vertex $v\in P\cup Q$ of $G$ we write $\Gamma_G(v)$ to denote the set of neighbors of $v$ in $G$, or $\Gamma(v)$ if $G$ is clear from context. Similarly, for a set $S\subseteq P\cup Q$ we write $\Gamma_G(S)$ or $\Gamma(S)$ to the denote the neighborhood of the set in $G$.

\begin{definition}
A {\bf matching} in a graph is a set of edges such that no two of them share an end point. The {\bf maximum matching size} of a graph is the maximum possible size of a matching in it; we usually denote it $\mm{G}$.
\end{definition}

\begin{definition}
Given a graph $G = (P, Q, E)$, a {\bf fractional matching} is a set of non-negative edge weights $\vec{x}: E \to [0,1]$ such that no vertex has more than unit weight adjacent on it:
$$\forall v\in P\cup Q:\sum_{w\in\Gamma(v)}x_{vw}\le1$$
The {\bf size} of a fractional matching is the sum of all edge-weights.
\end{definition}

Note that an integral fractional matching corresponds to the classical matching definition. We will also use the extended notion of $\alpha$-matching of \cite{DBLP:conf/soda/GoelKK12}, which are classical fractional matching with a changed constraint for one side of the bipartition.

\begin{definition}
Given a graph $G = (P, Q, E)$, a {\bf $\alpha$-matching with respect to $P$} is a set of non-negative edge weights $\vec{x}: E \to [0, 1]$ that saturates each vertex of $P$ fractionally exactly $\alpha$ times and each vertex of $Q$ at most once.
\end{definition}

\begin{definition}
A vertex cover is a set of vertices $\Phi\subseteq P\cup Q$ such that all edges have at least one end point in $\Phi$.
\end{definition}

The following theorem is a fundamental fact about bipartite graphs, on which we will be relying throughout the paper.

\begin{theorem}\label{thm:vc=mm}
For any bipartite graph, the size of the maximum matching, the size of the maximum fractional matching, and the size of the minimum vertex cover are equal.
\end{theorem}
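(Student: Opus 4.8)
The plan is to prove the statement by a cyclic chain of inequalities. Write $\nu(G)=\mm{G}$ for the maximum matching size, $\nu_f(G)$ for the maximum fractional matching size, and $\tau(G)$ for the minimum vertex cover size. I will show $\nu(G)\le\nu_f(G)\le\tau(G)\le\nu(G)$, which forces all three quantities to be equal.

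The first inequality is immediate: the indicator vector of any matching is a feasible fractional matching of the same size, so $\nu_f(G)\ge\nu(G)$. The second inequality is a short weak-duality computation: let $\vec{x}$ be any fractional matching and $\Phi$ any vertex cover. Since every edge has at least one endpoint in $\Phi$, we get $\sum_{e\in E}x_e\le\sum_{v\in\Phi}\sum_{w\in\Gamma(v)}x_{vw}\le\sum_{v\in\Phi}1=|\Phi|$, where the middle step double-counts (nonnegatively) and the last step is the fractional matching constraint at each $v\in\Phi$. Choosing $\vec{x}$ a maximum fractional matching and $\Phi$ a minimum vertex cover yields $\nu_f(G)\le\tau(G)$.

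The substantive step is $\tau(G)\le\nu(G)$, i.e. König's theorem, and this is where I expect all the real work to be. I would argue constructively from a maximum matching $M$. Orient every edge of $M$ from $Q$ to $P$ and every non-matching edge from $P$ to $Q$, let $U\subseteq P$ be the set of $M$-unmatched vertices on the $P$ side, and let $Z$ be the set of all vertices reachable from $U$ along directed paths (equivalently, along $M$-alternating paths). One then checks three things: (i) no vertex of $Q\cap Z$ is $M$-unmatched, since such a vertex would close an $M$-augmenting path from $U$, contradicting maximality of $M$; (ii) $\Phi:=(P\setminus Z)\cup(Q\cap Z)$ is a vertex cover, because any edge whose $P$-endpoint lies in $Z$ must, by the reachability rules, have its $Q$-endpoint in $Z$ as well; and (iii) $|\Phi|=|M|$, because $\Phi$ contains no $M$-unmatched vertex and exactly one endpoint of each edge of $M$ (the $Q$-endpoint if it is in $Z$, the $P$-endpoint otherwise, and these cases are exclusive by the same reachability rules). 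Combining (ii) and (iii) gives $\tau(G)\le|\Phi|=|M|=\nu(G)$, closing the chain. (An equally valid alternative is to route the last inequality through max-flow/min-cut: build the unit-capacity network $s\to P\to Q\to t$, note $\nu(G)$ equals the max flow by integrality of flows on such a network, and read off a vertex cover whose size matches a minimum cut.)

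I expect steps one and two to be one-liners and the König direction to be the only place requiring care; the alternating-paths argument above is the cleanest self-contained route and keeps the proof elementary, but invoking LP duality plus total unimodularity of the bipartite incidence matrix would also establish all three equalities simultaneously if a shorter citation-based treatment is preferred.
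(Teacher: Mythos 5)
Your proof is correct. Note, however, that the paper offers no proof of this theorem at all: it is stated as ``a fundamental fact about bipartite graphs'' (K\H{o}nig's theorem together with its LP-duality refinement to fractional matchings) and simply used, so there is no argument in the paper to compare yours against. Your chain $\nu(G)\le\nu_f(G)\le\tau(G)\le\nu(G)$ is the standard and cleanest way to get all three equalities at once: the first two inequalities are the one-liners you describe, and the alternating-path construction of the cover $\Phi=(P\setminus Z)\cup(Q\cap Z)$ is sound --- in particular your claims (i)--(iii) all check out, including the slightly delicate point that a matching edge $\{p,q\}$ with $p\in Z$ forces $q\in Z$ because $p\notin U$ can only have been reached through its unique matching partner. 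The max-flow/min-cut alternative you mention would work equally well. No gaps.
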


\begin{corollary}
\label{corollary:m-vc}
If a matching and a vertex cover have the same size, both are optimal.
\end{corollary}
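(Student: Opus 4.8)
The plan is to derive the statement as a direct consequence of the König-type duality in Theorem~\ref{thm:vc=mm}, sandwiched with the elementary weak-duality inequality between matchings and vertex covers. First I would record the weak-duality observation: for \emph{any} matching $M$ and \emph{any} vertex cover $\Phi$ of a graph, $|M|\le|\Phi|$. Indeed, the edges of $M$ are pairwise vertex-disjoint, and since $\Phi$ covers every edge, each $e\in M$ has at least one endpoint in $\Phi$; choosing one such endpoint for each $e$ defines a map $M\to\Phi$ that is injective, because two distinct matching edges cannot be assigned the same vertex without sharing that endpoint. This gives $|M|\le|\Phi|$, and it uses nothing about bipartiteness.

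Next, suppose $M$ is a matching and $\Phi$ a vertex cover with $|M|=|\Phi|$. Combining the weak-duality bound above (first for $M$ against an \emph{optimal} vertex cover, then for an \emph{optimal} matching against $\Phi$) with Theorem~\ref{thm:vc=mm}, one obtains the chain
\[
|M| \;\le\; \mm{G} \;=\; \min\{\,|\Phi'| : \Phi' \text{ a vertex cover of } G\,\} \;\le\; |\Phi| \;=\; |M|.
\]
Since the two ends of the chain coincide, every inequality in it is an equality. From $|M|=\mm{G}$ we conclude that $M$ is a maximum matching, and from $|\Phi|=\min_{\Phi'}|\Phi'|$ we conclude that $\Phi$ is a minimum vertex cover; hence both are optimal, as claimed.

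There is essentially no obstacle here: the only thing to be careful about is invoking Theorem~\ref{thm:vc=mm} in the right direction, since weak duality alone yields only $|M|\le|\Phi|$ and says nothing about optimality — it is the equality $\mm{G}=\min_{\Phi'}|\Phi'|$ that pins both quantities to their extremal values. The corollary is therefore immediate once the weak-duality observation is stated.
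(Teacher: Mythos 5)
Your proof is correct and is exactly the intended derivation (the paper states this corollary without proof, as an immediate consequence of Theorem~\ref{thm:vc=mm} plus weak duality). One small remark: the full equality of Theorem~\ref{thm:vc=mm} is not even needed --- your own weak-duality observation already gives $\mm{G}\le\min_{\Phi'}|\Phi'|$, which is all the chain requires, so your closing claim that weak duality ``says nothing about optimality'' slightly undersells it.
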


Furthermore, we will rely on the following concentration inequality.

\begin{theorem}[Chernoff bound, see \cite{DBLP:books/daglib/0021015}]
Let $Y = \sum_{i = 1}^n X_i$ be the sum of $n$ independent binary random variable each with $\Pr{X_i = 1} = p_i$. Let $\mu_Y = \EX{Y} = \sum_{i = 1}^n p_i$. Then, for any $\e \in (0,1)$, we have:
$$\Pr{X \notin (1 \pm \epsilon) \mu_Y} \leq 2e^{-\frac{\epsilon^2 \mu_Y}{3}}$$ 
\label{theorem:chernoff}
\end{theorem}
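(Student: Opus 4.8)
The plan is to use the standard exponential-moment (Chernoff) method: bound the upper deviation $\Pr{Y\ge(1+\epsilon)\mu_Y}$ and the lower deviation $\Pr{Y\le(1-\epsilon)\mu_Y}$ separately by $e^{-\epsilon^2\mu_Y/3}$, then combine them with a union bound. For the upper tail, fix a parameter $t>0$ and apply Markov's inequality to the non-negative random variable $e^{tY}$, giving $\Pr{Y\ge(1+\epsilon)\mu_Y}=\Pr{e^{tY}\ge e^{t(1+\epsilon)\mu_Y}}\le e^{-t(1+\epsilon)\mu_Y}\EX{e^{tY}}$. By independence $\EX{e^{tY}}=\prod_{i=1}^n\EX{e^{tX_i}}$, and each factor equals $1-p_i+p_ie^t=1+p_i(e^t-1)\le\exp(p_i(e^t-1))$ by the elementary bound $1+x\le e^x$. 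Multiplying over $i$ yields $\EX{e^{tY}}\le\exp((e^t-1)\mu_Y)$, hence $\Pr{Y\ge(1+\epsilon)\mu_Y}\le\exp\big((e^t-1-t(1+\epsilon))\mu_Y\big)$.

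The next step is to optimize over $t$: the exponent is minimized at $t=\ln(1+\epsilon)>0$, which gives $\Pr{Y\ge(1+\epsilon)\mu_Y}\le\big(e^\epsilon/(1+\epsilon)^{1+\epsilon}\big)^{\mu_Y}$. It then suffices to verify the scalar inequality $e^\epsilon/(1+\epsilon)^{1+\epsilon}\le e^{-\epsilon^2/3}$ for $\epsilon\in(0,1)$, i.e.\ $(1+\epsilon)\ln(1+\epsilon)\ge\epsilon+\epsilon^2/3$. I would prove this by expanding $(1+\epsilon)\ln(1+\epsilon)-\epsilon=\sum_{k\ge2}\frac{(-1)^k}{k(k-1)}\epsilon^k=\frac{\epsilon^2}{2}-\frac{\epsilon^3}{6}+\frac{\epsilon^4}{12}-\cdots$; for $\epsilon\in(0,1)$ the term magnitudes are strictly decreasing, so this alternating series is at least its first two terms, $\frac{\epsilon^2}{2}-\frac{\epsilon^3}{6}\ge\frac{\epsilon^2}{2}-\frac{\epsilon^2}{6}=\frac{\epsilon^2}{3}$. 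Equivalently, one checks that $f(\epsilon)=(1+\epsilon)\ln(1+\epsilon)-\epsilon-\epsilon^2/3$ satisfies $f(0)=f'(0)=0$ and controls the sign of $f''$ on $(0,1)$.

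For the lower tail the argument is symmetric: apply Markov's inequality to $e^{-tY}$ with $t>0$, use $\EX{e^{-tX_i}}\le\exp(p_i(e^{-t}-1))$, and optimize at $t=-\ln(1-\epsilon)>0$ to obtain $\Pr{Y\le(1-\epsilon)\mu_Y}\le\big(e^{-\epsilon}/(1-\epsilon)^{1-\epsilon}\big)^{\mu_Y}$. The remaining scalar inequality $(1-\epsilon)\ln(1-\epsilon)\ge-\epsilon+\epsilon^2/2$ follows immediately from $\ln(1-\epsilon)=-\sum_{k\ge1}\epsilon^k/k$ (every term after the first two of $-\epsilon+(1-\epsilon)\ln(1-\epsilon)$ is non-negative), so the lower tail is at most $e^{-\epsilon^2\mu_Y/2}\le e^{-\epsilon^2\mu_Y/3}$. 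Combining the two tails by a union bound, $\Pr{Y\notin(1\pm\epsilon)\mu_Y}\le e^{-\epsilon^2\mu_Y/3}+e^{-\epsilon^2\mu_Y/3}=2e^{-\epsilon^2\mu_Y/3}$, which is the claimed bound.

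The proof is essentially mechanical. The only steps requiring any care are the pair of scalar estimates on $e^{\pm\epsilon}/(1\pm\epsilon)^{1\pm\epsilon}$, which is exactly where the hypothesis $\epsilon\in(0,1)$ and the specific constant $1/3$ enter; if one is content to quote these as standard facts, the whole argument reduces to two applications of Markov's inequality together with independence of the $X_i$.
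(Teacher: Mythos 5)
Your proof is correct, and it is the standard exponential-moment argument; the paper itself gives no proof of this statement, citing it as a known fact from Alon and Spencer, and your derivation (Markov on $e^{\pm tY}$, independence, the optimization $t=\pm\ln(1\pm\epsilon)$, and the two scalar inequalities $(1+\epsilon)\ln(1+\epsilon)\ge\epsilon+\epsilon^2/3$ and $(1-\epsilon)\ln(1-\epsilon)\ge-\epsilon+\epsilon^2/2$ for $\epsilon\in(0,1)$) is exactly the textbook proof behind that citation. All the individual estimates check out, including the alternating-series bound for the upper-tail inequality, so nothing further is needed.
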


\section{Our coreset: the matching skeleton}
\label{section:matchingskeletoncoreset}

In this section, we recall the notion of matching skeleton, introduced by \cite{DBLP:conf/soda/GoelKK12} and later rediscovered by \cite{DBLP:conf/soda/BernsteinHR18}. We simplify slightly the original definitions and results to suit our needs. We also introduce a new related object, the canonical vertex cover which is central to our proof.\\

We define a partition of the vertex set of $G$ into subgraphs of varying vertex expansion as follows. For each $i = 1,\, \dots$ we define a tuple $\block$ iteratively as follows, starting with $G_0 = G$, $P_0=P$:

\begin{enumerate}
\item Let $\alpha_i = \min_{\emptyset\neq S\subseteq P_{i-1}} \frac{|\Gamma_{G_{i-1}}(S)|}{|S|}$
\item Let $P_i=$ largest $S\subseteq P_{i-1}$ such that $\frac{|\Gamma_{G_{i-1}}(S)|}{|S|}=\alpha_i$
\item Let $Q_i= \Gamma_{G_{i-1}}(P_i)$
\item $G_i = G_{i-1} \setminus (P_i \cup Q_i)$
\end{enumerate}

This process continues until $G_i$ is empty.

\begin{definition}
We call each $\block$ a {\bf block} and $\alpha_i$ its {\bf expansion level}, which is carried over to the vertices of the block using the notation $\alpha(v) := \alpha_i$ for $v \in P_i \cup Q_i$. We call the collection $\{\block\}_{i\in[k]}$ the {\bf block decomposition} of $G$.
\end{definition}

\begin{remark}
A practical way to find $\alpha_1$ is to solve several max-flow instances. For some $\alpha \in \mathbb{R}_+$, let $G_\alpha$ be a copy of $G$ where edges are directed from $P$ to $Q$ with an infinite weight. Also part of $G_\alpha$ is a source vertex $s$ which has edges directed toward each $p \in P$ with weight $\alpha$ and a sink vertex $t$ with unit-weight edges incoming from each $q \in Q$. Observe that $\alpha_1 = \inf\{\alpha \in \mathbb{R}_+: |\mc(\alpha)| > 1\}$ where $\mc(\alpha)$ is the min-cut containing $s$ in $G_\alpha$. Finding $\alpha_1$ thus reduces to solving max-flow instances with increasing $\alpha$ until a non-trivial min-cut is found. This cut actually consists of $P_1$ and $Q_1$ together with $s$. The remaining of the partition is obtained by repeating this argument.
\end{remark}

We now recall the main properties of the block decomposition of $G$, most of which comes from Section 3 of \cite{DBLP:conf/soda/GoelKK12}.

\begin{lemma}[\cite{DBLP:conf/soda/GoelKK12}]
\label{lemma:blockproperties}
Let $\{\block\}_{i \in [k]}$ be the block partition of $G$. The sequence $(\alpha_i)_{i \in [k]}$ is strictly increasing and such that $\alpha_i = |Q_i|/|P_i|$. Also, for any $i \in [k]$:
$$\Gamma(P_i) \subseteq \bigcup_{j \leq i} Q_j$$
\end{lemma}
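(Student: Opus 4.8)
The plan is to derive all three statements directly from the iterative construction, handling them in order of difficulty. The identity $\alpha_i=|Q_i|/|P_i|$ is immediate: by Step~2 the set $P_i$ attains the minimum ratio, so $\alpha_i=|\Gamma_{G_{i-1}}(P_i)|/|P_i|$, and by Step~3 we have $Q_i=\Gamma_{G_{i-1}}(P_i)$, whence $\alpha_i=|Q_i|/|P_i|$. I will also record two bookkeeping facts that get used repeatedly. First, the sets $P_1,P_2,\dots$ are pairwise disjoint and likewise the sets $Q_1,Q_2,\dots$ are pairwise disjoint, because Step~4 deletes $P_i\cup Q_i$ from $G_{i-1}$ before the next iteration (so at Step~$i$ the minimization in Steps~1--2 ranges over the current $P$-side $P\setminus\bigcup_{j<i}P_j$, and $P_i$ lies in it). Second, an edge $vw\in E$ with $v\in P$ and $v\notin\bigcup_{j<i}P_j$ is still present in $G_{i-1}$ if and only if $w\notin\bigcup_{j<i}Q_j$, since $v$ itself has not been deleted during Steps $1,\dots,i-1$.

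For the containment $\Gamma_G(P_i)\subseteq\bigcup_{j\le i}Q_j$, I would fix $v\in P_i$ and an edge $vw\in E$. If $w\notin\bigcup_{j<i}Q_j$, then by the second bookkeeping fact the edge $vw$ survives into $G_{i-1}$, so $w\in\Gamma_{G_{i-1}}(P_i)=Q_i$; otherwise $w\in Q_j$ for some $j<i$. In either case $w\in\bigcup_{j\le i}Q_j$, which proves the containment.

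The remaining and main point is strict monotonicity of $(\alpha_i)$, and this is where the maximality in Step~2 is essential — I expect the neighborhood bookkeeping here to be the only part needing real care. Suppose toward a contradiction that $\alpha_{i+1}\le\alpha_i$. Consider $S:=P_i\cup P_{i+1}$, a nonempty subset of the $P$-side of $G_{i-1}$ (indeed $P_{i+1}$ lies in the $P$-side of $G_i$, hence in that of $G_{i-1}$, and is disjoint from $P_i$). Any edge out of $P_{i+1}$ present in $G_{i-1}$ either has its $Q$-endpoint in $Q_i$ or survives into $G_i$, so $\Gamma_{G_{i-1}}(P_{i+1})\subseteq Q_i\cup\Gamma_{G_i}(P_{i+1})=Q_i\cup Q_{i+1}$; together with $\Gamma_{G_{i-1}}(P_i)=Q_i$ this gives $\Gamma_{G_{i-1}}(S)\subseteq Q_i\cup Q_{i+1}$, a disjoint union. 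Hence
$$\frac{|\Gamma_{G_{i-1}}(S)|}{|S|}\le\frac{|Q_i|+|Q_{i+1}|}{|P_i|+|P_{i+1}|}=\frac{\alpha_i|P_i|+\alpha_{i+1}|P_{i+1}|}{|P_i|+|P_{i+1}|}\le\alpha_i,$$
the last step being the mediant inequality combined with $\alpha_{i+1}\le\alpha_i$. But $\alpha_i$ is the minimum of this ratio over all nonempty subsets of the $P$-side of $G_{i-1}$, so every inequality above is tight; in particular the mediant equals $\alpha_i$, which forces $\alpha_{i+1}=\alpha_i$ since $|P_{i+1}|>0$, and then $S=P_i\cup P_{i+1}$ is a minimizing set strictly larger than $P_i$, contradicting the choice of $P_i$ in Step~2. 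Therefore $\alpha_{i+1}>\alpha_i$, completing the proof.
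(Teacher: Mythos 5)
Your proof is correct, but note that the paper itself does not prove this lemma at all: it is imported wholesale from \cite{DBLP:conf/soda/GoelKK12} (``most of which comes from Section 3 of...''), where the analogous facts are established through the flow/cut-based development of the decomposition rather than directly from the iterative definition given here. So what you have written is a self-contained, elementary substitute, and it is sound: the identity $\alpha_i=|Q_i|/|P_i|$ is indeed immediate from Steps 2--3; your ``bookkeeping'' observation that $G_{i-1}$ is the induced subgraph on the undeleted vertices correctly yields $\Gamma_G(P_i)\subseteq\bigcup_{j\le i}Q_j$; and the monotonicity argument via $S=P_i\cup P_{i+1}$, the disjointness of $Q_i$ and $Q_{i+1}$, the mediant/weighted-average bound, and the maximality of $P_i$ among minimizers is exactly the right use of the ``largest $S$'' clause in Step 2 (without it, strictness genuinely fails). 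One small point you silently and reasonably repaired: the paper's construction overloads notation, writing $P_{i-1}$ both for the previous block and for the remaining $P$-side over which Steps 1--2 minimize; your reading (minimize over $P\setminus\bigcup_{j<i}P_j$) is the intended one. The only ingredient you lean on without comment is that ``largest minimizer'' is well defined (the union of two minimizing sets again minimizes), but since you only ever compare $S$ against $P_i$ by cardinality, your contradiction goes through under either reading, so this is a matter of the construction's well-definedness rather than a gap in your proof.
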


Intuitively, each block $P_i\cup Q_i$ is associated with a certain expansion of the $P_i$ side, namely $\alpha_i$. The expansion of the block cannot be greater than $\alpha_i$, as $|Q_i|=\alpha_i|P_i|$. However, it is also no less than $\alpha_i$, as the entire block admits of an $\alpha_i$-matching with respect to $P_i$.

\begin{lemma}[\cite{DBLP:conf/soda/GoelKK12},\cite{DBLP:conf/soda/BernsteinHR18}]
Let $\bigraph$ be a graph together with its block decomposition $\{\block\}_{i\in[k]}$. For each $i\in[k]$ there is an $\alpha_i$-matching of $P_i\cup Q_i$ with respect to $P_i$.
\end{lemma}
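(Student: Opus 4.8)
The plan is to prove existence via a max-flow argument on the block, in the spirit of the Remark above. Fix $i\in[k]$ and let $H$ be the bipartite graph on $P_i\cup Q_i$ whose edge set consists of the edges of $G_{i-1}$ incident to $P_i$. Since $Q_i=\Gamma_{G_{i-1}}(P_i)$, all of these edges land in $Q_i$, so $H$ is genuinely bipartite between $P_i$ and $Q_i$, and moreover $\Gamma_{G_{i-1}}(S)=\Gamma_H(S)$ for every $S\subseteq P_i$. The first step is to record the expansion (Hall-type) property of the block: for every nonempty $S\subseteq P_i$ we have $S\subseteq P_{i-1}$, so the minimality in the definition of $\alpha_i$ gives $|\Gamma_{G_{i-1}}(S)|\ge \alpha_i|S|$, i.e. $|\Gamma_H(S)|\ge\alpha_i|S|$.

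Next I would set up a flow network $N$ with a source $s$, a sink $t$, an arc $s\to p$ of capacity $\alpha_i$ for each $p\in P_i$, an arc $p\to q$ of capacity $1$ for each edge $pq$ of $H$, and an arc $q\to t$ of capacity $1$ for each $q\in Q_i$. Any feasible $s$--$t$ flow of value $\alpha_i|P_i|$ in $N$ must saturate every source arc; reading off the flow values on the middle arcs then yields edge weights $x_{pq}\in[0,1]$ with $\sum_{q}x_{pq}=\alpha_i$ for every $p\in P_i$ and $\sum_{p}x_{pq}\le 1$ for every $q\in Q_i$, which is exactly an $\alpha_i$-matching of $P_i\cup Q_i$ with respect to $P_i$. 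So it suffices to exhibit such a flow, and by the max-flow/min-cut theorem it is enough to show every $s$--$t$ cut of $N$ has capacity at least $\alpha_i|P_i|$ (the cut isolating $s$ shows it is exactly this).

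The crux is this cut estimate. Given an $s$--$t$ cut, let $S_P,S_Q$ denote the $P_i$- and $Q_i$-vertices on the source side, and set $T_P=P_i\setminus S_P$, $T_Q=Q_i\setminus S_Q$. The cut capacity is $\alpha_i|T_P|+|S_Q|+e(S_P,T_Q)$, where $e(S_P,T_Q)$ is the number of $H$-edges between $S_P$ and $T_Q$. Each vertex of $\Gamma_H(S_P)\cap T_Q$ contributes at least one such edge, so $e(S_P,T_Q)\ge|\Gamma_H(S_P)\cap T_Q|$, whence $|S_Q|+e(S_P,T_Q)\ge|\Gamma_H(S_P)\cap S_Q|+|\Gamma_H(S_P)\cap T_Q|=|\Gamma_H(S_P)|\ge\alpha_i|S_P|$ by the first step (the inequality being trivial when $S_P=\emptyset$). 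Adding $\alpha_i|T_P|$ gives capacity at least $\alpha_i(|S_P|+|T_P|)=\alpha_i|P_i|$, as desired.

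I expect the only real subtlety to be the unit capacity on the middle arcs: it is needed because $\alpha_i$ can exceed $1$ (e.g. a single $P$-vertex with two neighbours), so one cannot simply bound $x_{pq}\le\sum_q x_{pq}=\alpha_i$; this is precisely what forces the slightly more careful cut bound via $e(S_P,T_Q)\ge|\Gamma_H(S_P)\cap T_Q|$. Everything else is bookkeeping. One could alternatively phrase the argument as LP duality for the fractional $b$-matching polytope of $H$, but the flow formulation matches the Remark and seems cleanest.
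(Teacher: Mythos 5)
Your proof is correct: the Hall-type bound $|\Gamma_H(S)|\ge\alpha_i|S|$ from the minimality of $\alpha_i$, fed into the max-flow/min-cut cut estimate (with the unit middle capacities handled exactly as you note), does yield the $\alpha_i$-matching. The paper itself offers no proof of this lemma---it is imported from \cite{DBLP:conf/soda/GoelKK12,DBLP:conf/soda/BernsteinHR18}---but your flow network is precisely the one sketched in the paper's Remark on computing $\alpha_1$, so this is essentially the intended argument.
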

\begin{remark}[\cite{DBLP:conf/soda/GoelKK12}]
The above $\alpha$-matchings can easily be made to have cycle-free supports, by eliminating cycles through standard techniques.
\end{remark}

Now that the block decompositon of a graph is introduced, we can define matching skeletons which are simply the union of the above introduced cycle-free $\alpha$-matching for each block.

\begin{definition}[Matching skeleton \cite{DBLP:conf/soda/GoelKK12}]
Let $\bigraph$ be a graph together with its block decomposition $\{\block\}_{i \in [k]}$. For each $i \in [k]$, let $\vec{x}_i: (P_i \times Q_i) \cap E \to [0, 1]$ be a cycle-free $\alpha_i$-matching. We call 
$$H = \bigcup_{i \in [k]} \support{\vec{x}_i}$$
a matching skeleton of $G$. See Figure \ref{figure:coreset} for a visual example.
\end{definition}

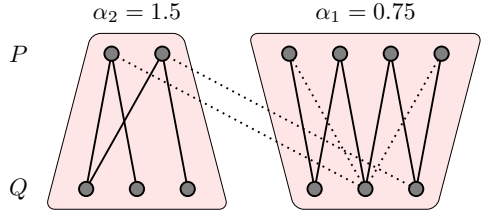
\begin{figure}[ht]
\vskip 0.2in
\begin{center}
\scalebox{0.9}{
\begin{tikzpicture}

\coordinate (bca) at (6, 0);
\coordinate (bcb) at (6.75, 0);
\coordinate (bcc) at (7.5, 0);
\coordinate (bsa) at (6.375, 2);
\coordinate (bsb) at (7.125, 2);

\coordinate (cca) at (9.375, 0);
\coordinate (ccb) at (10.125, 0);
\coordinate (ccc) at (10.875, 0);
\coordinate (csa) at (9, 2);
\coordinate (csb) at (9.75, 2);
\coordinate (csc) at (10.5, 2);
\coordinate (csd) at (11.25, 2);

\draw[fill=red!10, rounded corners] (5.4, -0.3) -- (8.1, -0.3) -- (7.425, 2.3) -- (6.075, 2.3)-- cycle;
\draw[fill=red!10, rounded corners] (9.075, -0.3) -- (11.175, -0.3) -- (11.85, 2.3) -- (8.4, 2.3)-- cycle;

\path[draw=black, thick] (bca) -- (bsa);
\path[draw=black, thick] (bca) -- (bsb);
\path[draw=black, thick] (bcb) -- (bsa);
\path[draw=black, thick] (bcc) -- (bsb);

\path[draw=black, thick] (cca) -- (csa);
\path[draw=black, thick] (cca) -- (csb);
\path[draw=black, thick] (ccb) -- (csb);
\path[draw=black, thick] (ccb) -- (csc);
\path[draw=black, thick] (ccc) -- (csc);
\path[draw=black, thick] (ccc) -- (csd);
\path[draw=black, dotted, thick] (ccb) -- (csa);
\path[draw=black, dotted, thick] (ccb) -- (csd);

\path[draw=black, thick, dotted] (ccb) -- (bsa);
\path[draw=black, thick, dotted] (ccc) -- (bsb);

\node[circle, thick, draw, fill=gray, minimum size=6, inner sep=0pt, outer sep=0pt] at (bca) {};
\node[circle, thick, draw, fill=gray, minimum size=6, inner sep=0pt, outer sep=0pt] at (bcb) {};
\node[circle, thick, draw, fill=gray, minimum size=6, inner sep=0pt, outer sep=0pt] at (bcc) {};
\node[circle, thick, draw, fill=gray, minimum size=6, inner sep=0pt, outer sep=0pt] at (bsa) {};
\node[circle, thick, draw, fill=gray, minimum size=6, inner sep=0pt, outer sep=0pt] at (bsb) {};

\node[circle, thick, draw, fill=gray, minimum size=6, inner sep=0pt, outer sep=0pt] at (cca) {};
\node[circle, thick, draw, fill=gray, minimum size=6, inner sep=0pt, outer sep=0pt] at (ccb) {};
\node[circle, thick, draw, fill=gray, minimum size=6, inner sep=0pt, outer sep=0pt] at (ccc) {};
\node[circle, thick, draw, fill=gray, minimum size=6, inner sep=0pt, outer sep=0pt] at (csa) {};
\node[circle, thick, draw, fill=gray, minimum size=6, inner sep=0pt, outer sep=0pt] at (csb) {};
\node[circle, thick, draw, fill=gray, minimum size=6, inner sep=0pt, outer sep=0pt] at (csc) {};
\node[circle, thick, draw, fill=gray, minimum size=6, inner sep=0pt, outer sep=0pt] at (csd) {};

\node (t2) at (6.75, 2.6) {$\alpha_2 = 1.5$};
\node (t3) at (10.125, 2.6) {$\alpha_1 = 0.75$};

\node at (5, 0) {$Q$};
\node at (5, 2) {$P$};
\end{tikzpicture}
}
\caption{A graph $\bigraph$ and its block partition. A cycle-free matching skeleton is shown with solid edges. Due to the construction of the block partition, edges between $P_2$ and $Q_1$ cannot be part of any matching skeleton. Also, from Lemma \ref{lemma:blockproperties}, no edge can exist between $Q_2$ and $P_1$.}
\label{figure:coreset}
\end{center}
\vskip -0.2in
\end{figure}

\begin{remark}
The matching skeleton coreset has size at most $|P| + |Q| -1$, as it is always a forest.
\end{remark}

We now describe a special kind of vertex cover, related with the block partition of a graph, which will be a crucial tool in analyzing the quality of our coreset.

\begin{definition}[Canonical vertex cover]
\label{definition:canonicalvc}
Given a graph $\bigraph$ together with its block decomposition, we call
$$\Phi=\{q\in Q \,|\, \alpha(q) < 1\} \cup \{p\in P \,|\, \alpha(p) \geq 1\}$$
the {\bf canonical vertex cover} of $G$. That is, in each block we take the smaller side of the bipartition.
\end{definition}

\begin{lemma}
\label{lemma:canonicalvc}
The canonical vertex cover is a minimum vertex cover.
\end{lemma}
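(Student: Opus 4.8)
The plan is to exhibit a matching of size $|\Phi|$ and invoke Corollary~\ref{corollary:m-vc}, after first checking that $\Phi$ is actually a vertex cover. For the cover property: take any edge $uv$ with $u \in P$, $v \in Q$. If $\alpha(u) \ge 1$ then $u \in \Phi$ and we are done, so suppose $\alpha(u) < 1$, i.e. $u \in P_i$ for some block with $\alpha_i < 1$. By Lemma~\ref{lemma:blockproperties} we have $\Gamma(P_i) \subseteq \bigcup_{j \le i} Q_j$, so $v \in Q_j$ for some $j \le i$; since the sequence $(\alpha_\ell)$ is strictly increasing, $\alpha(v) = \alpha_j \le \alpha_i < 1$, hence $v \in \Phi$. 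So every edge is covered.

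It remains to build a matching of size $|\Phi|$. The natural choice is to take, for each block $\block$, a maximum matching of the induced bipartite graph $(P_i, Q_i, E \cap (P_i \times Q_i))$ saturating the smaller side. Concretely, if $\alpha_i \ge 1$ then $|P_i| \le |Q_i|$ and I claim the block admits a matching saturating all of $P_i$; if $\alpha_i < 1$ then $|Q_i| \le |P_i|$ and I claim it admits a matching saturating all of $Q_i$. In either case the block contributes exactly $\min(|P_i|, |Q_i|)$ matched edges, which is precisely the number of vertices of that block lying in $\Phi$. Summing over all blocks and noting that the blocks are vertex-disjoint, the union of these per-block matchings is a matching of size $\sum_i \min(|P_i|,|Q_i|) = |\Phi|$. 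Combined with the cover bound, Theorem~\ref{thm:vc=mm} and Corollary~\ref{corollary:m-vc} then give that $\Phi$ is a minimum vertex cover.

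The one nontrivial point — and the step I expect to be the main obstacle — is justifying the per-block saturation claim. When $\alpha_i \ge 1$ this follows immediately from the existence of an $\alpha_i$-matching of $P_i \cup Q_i$ with respect to $P_i$ (the lemma of \cite{DBLP:conf/soda/GoelKK12},\cite{DBLP:conf/soda/BernsteinHR18} quoted above): scaling that fractional $\alpha_i$-matching down by $1/\alpha_i \le 1$ yields a fractional matching saturating every vertex of $P_i$ to exactly $1$, which by the integrality of the bipartite matching polytope (Theorem~\ref{thm:vc=mm}) implies an integral matching saturating $P_i$. When $\alpha_i < 1$ I instead need to saturate $Q_i$; here I would argue via Hall's condition on the $Q_i$ side: for any $T \subseteq Q_i$, maximality of $P_i$ in the defining process (step 2, applied inside $G_{i-1}$) forces $|\Gamma_{G_{i-1}}(S)| > \alpha_i |S|$ for every $S \subseteq P_{i-1}$ with $S \not\subseteq P_i$, and a short counting/complementation argument using $|Q_i| = \alpha_i |P_i| < |P_i|$ shows $|\Gamma_{G_i}(T) \cap P_i| \ge |T|$, so Hall's theorem gives the desired matching. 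Alternatively, and perhaps more cleanly, one can observe that the $\alpha_i$-matching already has total weight $\alpha_i |P_i| = |Q_i|$ and saturates each $q \in Q_i$ to at most $1$; restricting to the block and again applying polytope integrality produces an integral matching of size $|Q_i|$, which necessarily saturates all of $Q_i$. I would use whichever of these is shortest to write; the rest is bookkeeping.
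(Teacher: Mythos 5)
Your proposal is correct and follows essentially the same route as the paper: verify the cover property via the neighborhood containment and monotonicity of Lemma~\ref{lemma:blockproperties}, then exhibit a matching of size $|\Phi|$ block by block using the $\alpha_i$-matchings (scaled down by $\alpha_i$ when $\alpha_i\ge 1$). The only difference is cosmetic: the paper stops at the fractional matching of size $|\Phi|$, since Theorem~\ref{thm:vc=mm} already equates fractional and integral optima, so your extra integrality step (and certainly the Hall's-condition detour for $\alpha_i<1$) is unnecessary --- your ``cleaner alternative'' is exactly the paper's argument.
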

\begin{proof}
First we show that $\Phi$ is indeed a vertex cover. Suppose there exists an edge $\{p,\, q\}$ not adjacent on $\Phi$ and let $p \in P_i$ and $q \in Q_j$. By definition of the canonical vertex cover, this means that $\alpha_j \geq 1 > \alpha_i$ and hence $i < j$ using monotonicity of the expansion levels (Lemma \ref{lemma:blockproperties}). In turn, this implies that $\Gamma(p) \not\subseteq \bigcup_{\ell \leq i} Q_\ell$: a contradiction with Lemma \ref{lemma:blockproperties}.\\

We now proceed to show that $\Phi$ is minimum, by showing that there exists a fractional matching of size $|\Phi|$ (See Corollary~\ref{corollary:m-vc}). We define this fractional matching block-by-block: Consider the block $\block$.
\begin{itemize}
\item If $\alpha_i < 1$, then $\Phi\cap (P_i \cup Q_i)$ is exactly $Q_i$. In this case, we can take the $\alpha_i$-matching with respect to $P_i$ as our fractional matching. This will have size $\alpha_i|P_i|=|Q_i|$, exactly as desired.
\item On the other hand, if $\alpha_i \geq 1$, then $\Phi\cap (P_i \cup Q_i)$ is exactly $P_i$. Thus, an $\alpha_i$-matching with respect to $P_i$ \textit{scaled down by a factor of $\alpha_i$} is a valid fractional matching of the block, and has size $|P_i|$.

\end{itemize}
\end{proof}

The above deduction also shows that any matching skeleton contains a maximum matching. Therefore, the matching skeleton coreset performs at least as well as the maximum matching coreset of \cite{DBLP:conf/spaa/AssadiK17}. In particular, this directly yields a lower bound of $1/3$ on the approximation ratio of our coreset. However, a matching skeleton retains more information from the input graph, as the entire block partition can be recovered from it. This allows for a better approximation ratio as Section \ref{section:main} demonstrates.

\begin{remark}
Let us draw the parallels between the server flows of \cite{DBLP:conf/soda/BernsteinHR18} and the notion of matching skeleton of \cite{DBLP:conf/soda/GoelKK12}. In the context of \cite{DBLP:conf/soda/BernsteinHR18}, the support of a realization of the balanced server flow is simply a matching skeleton.  The balancedness condition corresponds to the neighboring property of Lemma \ref{lemma:blockproperties}. Finally, the server flow values of \cite{DBLP:conf/soda/BernsteinHR18} are exactly the expansion levels.
\end{remark}

Finally, we prove a structural result about the robustness of the block decomposition under changes to the edge-set of $G$. This will be crucial to our proofs in both Sections~\ref{section:main} and~\ref{section:limitations}.

\begin{lemma}\label{lemma:robust}
Let $\bigraph$ be a graph together with its block decomposition $\{\block\}_{i\in[k]}$ and $H$ a matching skeleton of $G$. Let $E^+, E^- \subseteq P \times Q$ be two sets of edges such that $E^- \cap H = \emptyset$ and for any $\{p,\, q\} \in E^+$, $\alpha(p) \geq \alpha(q)$.\\

Denote by $G' = (P, Q, E')$ the modification of $G$ with edge set $E' = (E \cup E^+) \setminus E^-$. The block decomposition of $G'$ is still $\{\block\}_{i\in[k]}$, and therefore $H$ remains a valid matching skeleton of $G'$.
\end{lemma}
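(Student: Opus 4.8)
The plan is to verify by induction on the block index $i$ that the iterative construction of the block decomposition of $G'$ produces exactly the same sequence of tuples $\block$ as that of $G$. The key observation is that the construction at step $i$ depends only on the graph $G_{i-1}$ induced on $P_{i-1} \cup (\text{remaining } Q\text{-vertices})$, so it suffices to show that the edge modifications $E^+$ and $E^-$ never affect the relevant minimum-expansion computation within any single block.

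First I would dispose of $E^-$. Since $E^- \cap H = \emptyset$, every removed edge $\{p,q\}$ is absent from the matching skeleton. I claim this forces $\alpha(p) > \alpha(q)$: indeed, by the neighbourhood property of Lemma~\ref{lemma:blockproperties}, $\Gamma(p) \subseteq \bigcup_{j \le i} Q_j$ where $p \in P_i$, so $\alpha(q) \le \alpha(p)$; and if $\alpha(q) = \alpha(p)$, i.e.\ $p$ and $q$ lie in the same block $P_i \cup Q_i$, then the edge would be eligible for the $\alpha_i$-matching of that block — but one needs to be slightly careful here, as a \emph{particular} matching skeleton need not use every such edge. The cleaner route: removing an edge $\{p,q\}$ with $\alpha(p) > \alpha(q)$ is an edge \emph{between} two different blocks, with the $P$-endpoint in the later (higher-expansion) block. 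Such an edge is irrelevant to the construction: when block $i$ (containing $q$) is being carved out, $p$ has already been removed along with its block $P_j$, $j < $ the index of $q$'s block — wait, that is backwards. Let me reorganize: $\alpha(p) > \alpha(q)$ means $p$'s block has higher index, so $q$'s block is processed \emph{first}. At that stage $p$ is still present in $P_{i-1}$. However, the edge $\{p,q\}$ with $q \in Q_i$ only matters if some set $S \subseteq P_{i-1}$ achieving the minimum expansion contains $p$ — but by maximality of $P_i$ in step 2, and the fact that $\Gamma(p)$ already contains a vertex outside $Q_i$ (namely in $P$'s own later block's neighbourhood), one shows $p \notin P_i$, so $\{p,q\}$ plays no role. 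I expect this bookkeeping — showing that inter-block edges of the stated orientation are inert for every step of the construction — to be the main obstacle, and it must be argued uniformly for added edges $E^+$ as well, since the hypothesis $\alpha(p) \ge \alpha(q)$ for $\{p,q\} \in E^+$ is exactly the same orientation condition (with the boundary case $\alpha(p) = \alpha(q)$, an intra-block edge, also needing attention: adding an edge inside a block cannot lower that block's expansion below $\alpha_i$, nor change which subset is the maximal minimizer, because within block $i$ the full neighbourhood structure already realizes expansion exactly $\alpha_i$).

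Concretely, the induction hypothesis at step $i$ is: the vertices $P_1 \cup Q_1 \cup \dots \cup P_{i-1} \cup Q_{i-1}$ have been removed identically in both processes, and the residual graph $G'_{i-1}$ has the same $P$-side $P_{i-1}$ and the same edge set as $G_{i-1}$ \emph{except} possibly for edges of $E^+ \cup E^-$ whose $Q$-endpoint lies in some already-removed $Q_j$, $j < i$ — and such edges are not incident to $P_{i-1}$ at all, hence do not appear in $G'_{i-1}$. For the inductive step, any edge of $E^+$ or $E^-$ with both endpoints in the "active" region $P_{i-1} \cup (\text{current } Q)$ must, by the orientation hypothesis and monotonicity of the $\alpha_j$, either be intra-block-$i$ or have its $Q$-endpoint in a later block $Q_\ell$, $\ell > i$; in the first case it does not change $\alpha_i$ or the maximal minimizer $P_i$ (the minimum in step 1 is over subsets $S$, and adding edges only grows neighbourhoods while removing edges of $H$ only removes "redundant" ones, by the block's $\alpha_i$-matching witnessing that expansion $\alpha_i$ is still attained and not beaten); in the second case the edge is not incident to $Q_i = \Gamma_{G_{i-1}}(P_i)$ and is irrelevant to identifying block $i$. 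Hence $(\alpha_i, P_i, Q_i)$ is reproduced, $G'_i = G'_{i-1} \setminus (P_i \cup Q_i)$ matches $G_i$ up to the same kind of harmless discrepancy, and the induction closes.

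Finally, once the block decomposition is shown to be unchanged, the $\alpha_i$-matchings used to build $H$ are still valid $\alpha_i$-matchings of the (unchanged) blocks $P_i \cup Q_i$ in $G'$, since $E^-$ removed no edge of $H$ and these $\alpha_i$-matchings are supported on $H \cap (P_i \times Q_i)$; therefore $H$ is a matching skeleton of $G'$, completing the proof.
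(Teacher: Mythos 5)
Your overall strategy --- induction on the block index, with $H\subseteq G'$ serving as the witness that expansion cannot drop, and the orientation condition on $E^+$ ensuring that the expansion of $P_i$ cannot rise --- is the same as the paper's, and your closing paragraph (why $H$ stays a matching skeleton once the decomposition is fixed) is correct. But the heart of the argument has a genuine gap. In step 1 of the construction the minimum $\min_{S}|\Gamma(S)|/|S|$ ranges over \emph{all} nonempty $S\subseteq P_{i-1}$, including sets that straddle several blocks; your bookkeeping only ever discusses the effect of a modified edge on ``block $i$'s expansion'' and on ``the maximal minimizer,'' treating blocks in isolation. The scenario you must rule out --- that after deleting $E^-$ some mixed set $S$, e.g.\ $P_1$ together with vertices of later blocks, has expansion below $\alpha_1$ in $G'$ --- is never addressed. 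The paper closes this with a single inequality chain: since $H\subseteq G'$ and $H$ has no cross-block edges, $|\Gamma'(S)|\ge|\Gamma_H(S)|=\sum_j|\Gamma_H(S\cap P_j)|\ge\sum_j\alpha_j|S\cap P_j|\ge\alpha_1|S|$, with strict inequality unless $S\subseteq P_1$. The strictness is also what gives maximality of $P_1$ as the minimizer, which you assert without proof. Your parenthetical remark that the $\alpha_i$-matching ``witnesses that expansion $\alpha_i$ is still attained and not beaten'' is this idea in embryo, but you apply it only block-locally.

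Two smaller but real problems. First, your opening claim that $E^-\cap H=\emptyset$ forces $\alpha(p)>\alpha(q)$ is false: an intra-block edge simply not used by the chosen cycle-free $\alpha_i$-matching also lies outside $H$. You half-retract this, but the ``cleaner route'' that follows still only treats inter-block deletions. Second, your case split in the inductive step (``intra-block-$i$ or $Q$-endpoint in a later block'') omits edges joining $P_j$ with $j>i$ to $Q_i$; such edges are legal in $G$ by Lemma~\ref{lemma:blockproperties} and permitted in both $E^+$ and $E^-$, and they are exactly the edges relevant to the mixed sets $S$ above. Replacing the edge-by-edge ``inertness'' accounting with the uniform bound through $\Gamma_H$ repairs all of these at once.
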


\begin{proof}
We will use $\Gamma(S)$, $\Gamma'(S)$ and $\Gamma_H(S)$ to denote the neighborhood of some set $S$ in the graphs $G$, $G'$ and $H$ respectively. Consider now the first step of the block decomposition of $G'$. We first prove that no set $S\subseteq P$ has lower expansion than $\alpha_1$ in $G'$. Consider any set $S\subseteq P$. We can lower bound the size of $\Gamma'(S)$ by $\Gamma_H(S)$ since $H\subseteq G'$. Moreover, we note that $H$ contains the support of an $\alpha_i$-matching with respect to $P_i$, in block $(P_i\cup Q_i)$, for each $i$. Therefore, the expansion of any subset in $P_i$ is at least $\alpha_i$ and
$$|\Gamma_H(S\cap P_i) \cap Q_i| = |\Gamma_H(S\cap P_i)|\ge\alpha_i|S\cap P_i|.$$
The equality comes from the fact that a matching skeleton contains no edge crossing two blocks. Using this, we have:
$$|\Gamma'(S)| \ge|\Gamma_H(S)| \ge\sum_{i=1}^k|\Gamma_H(S\cap P_i)\cap Q_i| =\sum_{i=1}^k|\Gamma_H(S\cap P_i)| \ge\sum_{i=1}^k\alpha_i|S\cap P_i| \ge\alpha_1|S|$$
Note that the statement is true with strict inequality when $S\not\subseteq P_1$. On the other hand, the expansion of $P_1$ in $G'$ is exactly $\alpha_1$, as $\Gamma'(P_1)=Q_1$. This is because $E^+$ cannot have any edge between $P_1$ and $Q\setminus Q_1$.\\
	
We thus have proven that the first block in the decomposition of $G'$ is $(P_1, \, Q_1, \, \alpha_1)$. One can then proceed by induction on $i$ to prove that the same is true for the $i^\text{th}$ block. The argument is identical to the base case by observing that since $E^+$ cannot have edges between $P_i$ and $\bigcup_{j=i+1}^kQ_j$, it does not increase the expansion of $P_i$.
\end{proof}

\section{Main Result}
\label{section:main}

Having defined the matching skeleton coreset, we now prove a lower bound of nearly $1/2$ on its effectiveness. This improves upon any known lower bound for a randomized composable coreset of size $O(n)$ for the maximum matching problem.
\begin{theorem}
$\MatchingSkeleton(G)$ constitutes a $\left(1/2-o(1)\right)$-approximate randomized composable coreset for maximum matching in any bipartite graph $\bigraph$ where $k=\omega(1)$ and the maximum matching size $\mm{G}=\omega(k\log n)$.
\label{theorem:halfapprox}
\end{theorem}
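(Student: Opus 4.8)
\medskip
\noindent\textbf{Proof proposal.}
The plan is to lower bound $\EX{\mm{H_1 \cup \dots \cup H_k}}$, where $H_i = \MatchingSkeleton(G_i)$, by exhibiting an explicit large matching in the union of the coresets. Fix a maximum matching $M^*$ of $G$ of size $\mu = \mm{G}$, and consider the canonical vertex cover $\Phi$ of $G$ (Lemma~\ref{lemma:canonicalvc}), which has size exactly $\mu$; in each block it selects the smaller side. By Lemma~\ref{lemma:blockproperties} every block has expansion equal to $|Q_i|/|P_i|$, so blocks with $\alpha_i<1$ contribute their $Q$-side to $\Phi$ and blocks with $\alpha_i\ge 1$ contribute their $P$-side. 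The key structural observation is that within a block $(P_i,Q_i,\alpha_i)$, the matching skeleton of \emph{any} subgraph is spread out: it contains, on the smaller side, a fractional matching that saturates the smaller side (roughly) fully. So the strategy is to argue that, after randomly partitioning, each vertex $v\in\Phi$ is "covered" from the side of $\Phi$ with high probability by an edge of $\bigcup_i H_i$, and then assemble these into an integral matching via an augmenting-path / fractional-to-integral argument.

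More concretely, I would proceed block by block. Consider a block $B=(P_i,Q_i,\alpha_i)$ with $\alpha_i<1$, so $\Phi\cap B=Q_i$ and $|Q_i|=\alpha_i|P_i|<|P_i|$; the symmetric case $\alpha_i\ge 1$ is handled by scaling. The crucial claim to establish is that the restriction of $G$ to $B$, after being randomly split into $G_1|_B,\dots,G_k|_B$, has the property that $\bigcup_i \MatchingSkeleton(G_i)|_B$ admits a fractional matching that saturates \emph{almost all} of $Q_i$. Here is where Lemma~\ref{lemma:robust} does the heavy lifting: the block decomposition is robust to edge deletions that avoid the skeleton and to edge additions that respect the expansion ordering. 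In particular, for each player $j$, the edges of $G_j$ that lie inside $B$ still form (together with possibly further blocks) a valid sub-instance, and the skeleton of $G_j$ restricted to $B$ saturates the $P_i$-side of $G_j|_B$ at level $\alpha_i$ or higher. Summing the skeleton fractional matchings over all $k$ players and using that $\mm{B}=|Q_i|=\omega(k\log n / (\text{number of blocks}))$ — wait, one must be careful: individual blocks may be small. So instead I would \emph{not} argue block-by-block in isolation, but globally: the whole instance $G$ restricted to "the $\Phi$-side per block" has a large matching, namely $\mu$, and I want to recover $(1-o(1))\mu$ of it.

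The cleanest route, I expect, is this. Let $M^*$ be a maximum matching; orient each of its $\mu$ edges toward its $\Phi$-endpoint (every $M^*$-edge has at least one endpoint in $\Phi$ since $\Phi$ is a vertex cover). For each player $j$, and each block $i$, $H_j=\MatchingSkeleton(G_j)$ contains on the smaller side of block $i$ a fractional matching saturating $(G_j)$'s $P_i$- or $Q_i$-side fully at the appropriate rate. The union $\bigcup_j H_j$ therefore supports a fractional matching $\vec{y}$ in which every vertex of $\Phi$ that has degree $\ge 1$ in $\bigcup_j G_j|_{\text{its block}}$ — i.e. essentially every vertex of $\Phi$ — is fractionally saturated, up to a small correction. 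The size of $\vec{y}$ restricted appropriately is then $(1-o(1))|\Phi| = (1-o(1))\mu$. Since $\bigcup_j H_j \subseteq G$ is bipartite, a fractional matching of size $s$ implies an integral matching of size $\ge s$ (Theorem~\ref{thm:vc=mm}), giving $\mm{\bigcup_j H_j}\ge (1-o(1))\mu$. That would yield $1-o(1)$, which is \emph{too strong} — so the real statement must be that the union of skeletons only lets us saturate the $\Phi$-side, and the fractional matching one builds has size $\tfrac12\mu + (\text{spread-out gain})$; the honest accounting is that a block with $\alpha_i<1$ lets us cover all of $Q_i$ (the small side) but the matching we extract crossing into $Q_{>i}$ or using the $P_i$ side is what must be bounded. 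I anticipate the actual argument is: in each block we get an integral matching of size $|P_i|\cdot\min(\alpha_i,1)$ on average losing a $(1\pm\e)$ factor from concentration (Chernoff, Theorem~\ref{theorem:chernoff}), and summing gives exactly $\sum_i |Q_i|\wedge|P_i| = |\Phi| = \mu$ — again too strong, so a factor of $2$ must be lost precisely because the skeleton edges of different players \emph{collide} on the small side: each of the $|P_i|$ vertices on the large side can absorb weight $\le 1$, limiting the union.

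\medskip
\noindent\textbf{Where the difficulty lies.} The main obstacle is making the "union of $k$ spread-out fractional matchings" argument give an \emph{integral} matching of size $\ge(\tfrac12-o(1))\mu$ without over- or under-counting: each player's skeleton saturates that player's local small side, but across players these saturations overlap on the large side of the block, and one must show the overlap costs at most a factor $2$, using that $\alpha_i$ is the true expansion (so the large side genuinely has only $|P_i|$ or $|Q_i|$ vertices of capacity). Concentration (Theorem~\ref{theorem:chernoff}) is needed to say that after the random split, each player receives roughly a $1/k$ fraction of each block's structure, and since $\mm{G}=\omega(k\log n)$ the deviations are $o(\mu)$; the robustness lemma (Lemma~\ref{lemma:robust}) is needed to certify that each $G_j$'s own block decomposition refines/aligns with $G$'s on the relevant vertex sets, so that the local skeletons compose. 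I would expect the bookkeeping around blocks with $\alpha_i$ close to $1$, and the passage from the fractional certificate to an integral matching via augmenting paths in the forest $\bigcup_j H_j$, to be the fiddliest parts.
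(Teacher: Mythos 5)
There is a genuine gap: the proposal never converges to an argument, and the route it gropes toward cannot work. The central premise---that each player's skeleton, restricted to a block of $G$'s decomposition, saturates that block's smaller side, so that the union of skeletons supports a fractional matching saturating almost all of $\Phi$---is false. The block decomposition of a random $1/k$-subsample $G_j$ bears no relation to that of $G$ (subsampling changes every expansion drastically), and Lemma~\ref{lemma:robust} cannot be used to align them: it concerns deleting edges disjoint from a fixed skeleton and adding edges respecting the $\alpha$-ordering, not edge subsampling. You correctly sense that your conclusion would yield $1-o(1)$, which is ``too strong''; indeed it would contradict the paper's own upper bound of $2/3+o(1)$ (Theorem~\ref{theorem:main-upper}). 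But you then only speculate about where a factor of $2$ might be lost (``collisions on the large side'') without producing a bound, so no step of the proof is actually established.

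The paper's argument is quite different and does not build a fractional matching in $\widetilde G$ at all. It fixes a maximum matching $M^*$, splits $G$ into $M^*$ and $G^-=G\setminus M^*$, and treats the partition of $G^-$ as \emph{adversarial}, randomizing only over the placement of $M^*$. If some $G^-_i$ already has a matching of size $\mu/2$ one is done, since every matching skeleton contains a maximum matching of its input. Otherwise, for a uniformly random $e\in M^*$ landing in $G_1$, one conditions on the realization $H=G_1\setminus\{e\}$: with high probability $\mm{H}\le\mu(1/2+2/k)$, so $H$ has a canonical vertex cover $\Phi_H$ of that size, and---this is where Lemma~\ref{lemma:robust} is actually used---any edge added to $H$ that is not adjacent on $\Phi_H$ \emph{must} appear in every matching skeleton of $G_1$ (otherwise the decompositions of $H$ and $G_1$ would coincide, forcing an edge from a low-$\alpha$ $P$-block to a high-$\alpha$ $Q$-block, contradicting Lemma~\ref{lemma:blockproperties}). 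Since, conditioned on $H$, the edge $e$ is uniform over the $\ge\mu(1-2/k)$ edges of $M^*\setminus H$, of which at most $\mu(1/2+2/k)$ touch $\Phi_H$, the edge $e$ is forced into the skeleton with probability $\ge 1/2-o(1)$; linearity of expectation over $M^*$ finishes the proof. The conditioning-on-$G_1\setminus\{e\}$ step and the ``uncovered edges are forced into the skeleton'' mechanism are the ideas missing from your proposal.
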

\begin{proof}
Our analysis is inspired by the maximum matching coreset analysis of ~\cite{DBLP:conf/soda/AssadiBBMS19}, however, we achieve a better approximation ratio using more subtle techniques. Let $\mu$ denote $\mm{G}$. Recall that by the definition of randomized composable coresets (Definition~\ref{definition:coreset}) we must randomly edge-partition $\bigraph$ into $k$ subgraphs $G_1,\ldots,G_k$, and show that the union of each coresets,
\begin{equation}
\widetilde G=\bigcup_{i=1}^k\MatchingSkeleton(G_i),\label{equation:main}
\end{equation}
has an expected maximum matching size of $\mu\cdot(1/2-o(1))$, over the randomness of the $k$-partition. We begin by choosing an arbitrary maximum matching $M^*$ of $G$. We separate $G$ in two parts: $M^*$ and $G^-:=G\backslash M^*$ for the purposes of analysis, and for every $i=1,\ldots, k,$ let $G^-_i:=G_i\cap G^-$.\\

We will show the stronger statement that even under {\bf adversarial partitioning} of $G^-$, Equation~\ref{equation:main} holds, as long as $M^*$ is partitioned randomly. From now on we will assume that the partition into $G^-_1,\ldots,G^-_k$ is fixed arbitrarily; we will show that either at least one of $G^-_i$ contains a large matching or $M^*\cap\widetilde G$ is large.\\

Consider an arbitrary $k$-partitioning of $G^-$ into $G^-_1,\ldots,G^-_k$ and let the maximum matching size of $G^-_i$ be $\mu^-_i$. If even one of $\mu^-_i$ is at least $\mu/2$, we are done. Indeed, following Lemma \ref{lemma:canonicalvc}, any matching skeleton of $G_i$ will contain a maximum matching, that is a matching of size $\mm{G_i}\ge\mu^-_i\ge\mu/2$, and hence so will $\widetilde G$. Therefore, we can focus on the case where $\max_{i=1}^k\mu^-_i\le\mu/2$ and use the following lemma, which is our main technical contribution:

\begin{lemma}[Main lemma]
\label{lemma:main}
Consider an arbitrary partitioning of $G^-$ where $\max_{i=1}^k\mu^-_i<\mu/2$. Let $e$ be a uniformly random element of $M^*$. Then
$$\Pr{e\in\widetilde G}\ge1/2-o(1),$$
where the probability is taken over the randomness of the partitioning of $M^*$ as well as the randomness of the choice of $e$.
\end{lemma}

The above lemma relies on a subtle probabilistic argument, and is formulated in terms of a uniformly random edge of $M^*$ for technical reasons. However, an immediate consequence of it is that at least nearly half of the edges of $M^*$ will be taken in $\widetilde G$. This follows by linearity of expectation:
$$\EX{\left\vert M^*\cap\widetilde G\right\vert} =\EX{\sum_{e\in M^*}\mathbbm1(e\in\widetilde G)} =\sum_{e\in M^*} \Pr{e\in\widetilde G} \ge\mu\cdot(1/2-o(1)).$$
where the last inequality follows by Lemma~\ref{lemma:main}. We have proven that Equation~\ref{equation:main} holds under adversarial partitioning of $G^-$ both when $\max_{i=1}^k\mu^-_i\ge\mu/2$ and when $\max_{i=1}^k\mu^-_i<\mu/2$, which implies the statement of the theorem.
\end{proof}
\vspace{0.1in}
We conclude the analysis of the $\MatchingSkeleton$ coreset by proving Lemma~\ref{lemma:main}.

\begin{proofof}{Lemma~\ref{lemma:main}}
Without loss of generality we may assume that $e\in G_1$. We know that the maximum matching size of $G^-_1$ is at most $\mu/2$. Consider now adding to $G^-_1$ all edges of $M^*\cap G_1$ {\it except for} $e$. Since the size of $M^*\cap G_1\backslash\{e\}$ is at most $2\mu/k$ with high probability by Theorem~\ref{theorem:chernoff}, the maximum matching size does not increase by more than $2\mu/k$.\\
    
We base our analysis on fixing the outcome of the random graph $G_1\backslash\{e\}$ to be some fixed $H$. We refer to this event that $G_1\backslash\{e\}=H$ as $\mathcal E(H)$. Suppose that indeed the maximum matching size of $H$ is at most $\mu\cdot(1/2+2/k)$, and hence that $H$ has a canonical vertex cover $\Phi$ of this size. Recall from Definition~\ref{definition:canonicalvc} that the canonical vertex cover contains exactly the vertices of $Q$ with $\alpha$-value {\it strictly less} than one and the vertices of $P$ with $\alpha$-values {\it at least} one. Therefore, any new edge added to $H$ that is {\it not} adjacent on $\Phi$ {\it must be} included in any matching skeleton, as we show in the following paragraph.\\
    
Indeed, consider $e=\{p,\, q\}$ to be such an edge, and suppose that there exists some matching skeleton $H$ of $G_1$ where $e$ is not included. This means, by Lemma~\ref{lemma:robust} with $E^-=\{e\}$ and $E^+=\emptyset$ that the block decompositions of $G_1$ and $H$ are identical. However, by definition of the canonical vertex cover $\Phi$ for $H$ and because $p,q \notin \Phi$, we have $\alpha_H(p) < 1 \leq \alpha_H(q)$. This implies that in the block partition of $G_1$, $p \in P_i$ and $q \in Q_j$ with $i < j$, which is a contradiction of Lemma~\ref{lemma:blockproperties}.\\

Consequently, if $e$ is not adjacent on $\Phi$, it must be taken into $\MatchingSkeleton(G_1)$. The last important observation is that the distribution of $e$, when conditioned on $\mathcal E(H)$, is uniform on $M^*\backslash H$. Indeed, this conditioning in no way breaks the symmetry between the unsampled edges $M^*\backslash H$, and $e$ is equally likely to be any of them. Therefore, $e$ is uniformly distributed among at least $\mu\cdot(1-2/k)$ edges among which at most $\mu\cdot(1/2+2/k)$ are adjacent on $\Phi$: Conditioned on $\mathcal E(H)$, where $\mm{H}\le\mu\cdot(1/2+2/k)$, the probability that $e$ is not adjacent on $\Phi$ and therefore $e\in\MatchingSkeleton(G_1)$ is at least $1/2-o(1)$.\\
    
The above deduction was made with the assumption that $\mm{H}\le\mu\cdot(1/2+2/k)$. However, recall that this happens with high probability by Theorem~\ref{theorem:chernoff}, therefore we can extend the result to full generality. Consider the possible outcomes of $G_1\backslash\{e\}$ to form the family $\mathcal H$. We can split $\mathcal H$ into the disjoint union of $\mathcal H_0$ and $\mathcal H^*$, where $\mathcal H^*$ comprises the anomalous outcomes where the maximum matching size of $H$ is greater than $\mu\cdot(1/2+2/k)$. Then,
\begin{align*}
\Pr{e\in \MatchingSkeleton(G_1)} &= \sum_{H\in\mathcal H} \Pr{\mathcal E(H)}\Pr{e\in\MatchingSkeleton(G_1)|\mathcal E(H)}\\
&\ge \sum_{H\in\mathcal H_0} \Pr{\mathcal E(H)}  \Pr{e\in\MatchingSkeleton(G_1)|\mathcal E(H)}\\
&\ge \sum_{H\in\mathcal H_0} \Pr{\mathcal E(H)}\cdot(1/2-o(1))\\
&=\Pr{G_1\backslash\{e\}\not\in\mathcal H^*}\cdot(1/2-o(1))\\
&\ge 1/2-o(1),
\end{align*}
as desired.
\end{proofof}

\begin{figure}[ht]
\vskip 0.2in
\begin{center}
\scalebox{0.9}{
\begin{tikzpicture}
\coordinate (aca) at (0, 0);
\coordinate (acb) at (0.75, 0);
\coordinate (acc) at (1.5, 0);
\coordinate (acd) at (2.25, 0);
\coordinate (ace) at (3, 0);
\coordinate (acf) at (3.75, 0);
\coordinate (acg) at (4.5, 0);
\coordinate (asa) at (1.125, 2);
\coordinate (asb) at (1.875, 2);
\coordinate (asc) at (2.625, 2);
\coordinate (asd) at (3.375, 2);

\coordinate (bca) at (6, 0);
\coordinate (bcb) at (6.75, 0);
\coordinate (bcc) at (7.5, 0);
\coordinate (bsa) at (6.375, 2);
\coordinate (bsb) at (7.125, 2);

\coordinate (cca) at (9.375, 0);
\coordinate (ccb) at (10.125, 0);
\coordinate (ccc) at (10.875, 0);
\coordinate (csa) at (9, 2);
\coordinate (csb) at (9.75, 2);
\coordinate (csc) at (10.5, 2);
\coordinate (csd) at (11.25, 2);

\draw[fill=red!10, rounded corners] (-0.6, -0.3) -- (5.1, -0.3) -- (3.675, 2.3) -- (0.825, 2.3)-- cycle;
\draw[fill=red!10, rounded corners] (5.4, -0.3) -- (8.1, -0.3) -- (7.425, 2.3) -- (6.075, 2.3)-- cycle;
\draw[fill=red!10, rounded corners] (9.075, -0.3) -- (11.175, -0.3) -- (11.85, 2.3) -- (8.4, 2.3)-- cycle;

\path[draw=black, thick] (acc) -- (csa);
\path[draw=black, thick] (ace) -- (csb);
\path[draw=black, thick] (bcb) -- (csc);
\path[draw=black, thick] (bcc) -- (csd);

\path[draw=black, thick, dotted] (aca) -- (asa);
\path[draw=black, thick, dotted] (bca) -- (asc);
\path[draw=black, thick, dotted] (cca) -- (asd);
\path[draw=black, thick, dotted] (ccb) -- (bsa);
\path[draw=black, thick, dotted] (ccc) -- (bsb);

\node[circle, thick, draw, fill=gray, minimum size=6, inner sep=0pt, outer sep=0pt] at (aca) {};
\node[circle, thick, draw, fill=gray, minimum size=6, inner sep=0pt, outer sep=0pt] at (acb) {};
\node[circle, thick, draw, fill=gray, minimum size=6, inner sep=0pt, outer sep=0pt] at (acc) {};
\node[circle, thick, draw, fill=gray, minimum size=6, inner sep=0pt, outer sep=0pt] at (acd) {};
\node[circle, thick, draw, fill=gray, minimum size=6, inner sep=0pt, outer sep=0pt] at (ace) {};
\node[circle, thick, draw, fill=gray, minimum size=6, inner sep=0pt, outer sep=0pt] at (acf) {};
\node[circle, thick, draw, fill=gray, minimum size=6, inner sep=0pt, outer sep=0pt] at (acg) {};
\node[circle, thick, draw, fill=gray, minimum size=6, inner sep=0pt, outer sep=0pt] at (asa) {};
\node[circle, thick, draw, fill=gray, minimum size=6, inner sep=0pt, outer sep=0pt] at (asb) {};
\node[circle, thick, draw, fill=gray, minimum size=6, inner sep=0pt, outer sep=0pt] at (asc) {};
\node[circle, thick, draw, fill=gray, minimum size=6, inner sep=0pt, outer sep=0pt] at (asd) {};

\node[circle, thick, draw, fill=gray, minimum size=6, inner sep=0pt, outer sep=0pt] at (bca) {};
\node[circle, thick, draw, fill=gray, minimum size=6, inner sep=0pt, outer sep=0pt] at (bcb) {};
\node[circle, thick, draw, fill=gray, minimum size=6, inner sep=0pt, outer sep=0pt] at (bcc) {};
\node[circle, thick, draw, fill=gray, minimum size=6, inner sep=0pt, outer sep=0pt] at (bsa) {};
\node[circle, thick, draw, fill=gray, minimum size=6, inner sep=0pt, outer sep=0pt] at (bsb) {};

\node[circle, thick, draw, fill=gray, minimum size=6, inner sep=0pt, outer sep=0pt] at (cca) {};
\node[circle, thick, draw, fill=gray, minimum size=6, inner sep=0pt, outer sep=0pt] at (ccb) {};
\node[circle, thick, draw, fill=gray, minimum size=6, inner sep=0pt, outer sep=0pt] at (ccc) {};
\node[circle, thick, draw, fill=gray, minimum size=6, inner sep=0pt, outer sep=0pt] at (csa) {};
\node[circle, thick, draw, fill=gray, minimum size=6, inner sep=0pt, outer sep=0pt] at (csb) {};
\node[circle, thick, draw, fill=gray, minimum size=6, inner sep=0pt, outer sep=0pt] at (csc) {};
\node[circle, thick, draw, fill=gray, minimum size=6, inner sep=0pt, outer sep=0pt] at (csd) {};

\draw[decoration={brace}, decorate, thick] (0.825, 2.4) -- node [above, pos=0.5] {$\Phi$} (3.675, 2.4);
\draw[decoration={brace}, decorate, thick] (6.075, 2.4) -- node [above, pos=0.5] {$\Phi$} (7.425, 2.4);
\draw[decoration={brace, mirror}, decorate, thick] (9.075, -0.4) -- node [below, pos=0.5] {$\Phi$} (11.175, -0.4);

\node (t1) at (2.25, -0.6) {$\alpha = 1.75$};
\node (t2) at (6.75, -0.6) {$\alpha = 1.5$};
\node (t3) at (10.125, 2.6) {$\alpha = 0.75$};

\node at (-1, 0) {$Q$};
\node at (-1, 2) {$P$};
\end{tikzpicture}}
\caption{A visual representation of the block decomposition of $H$ as trapezoids together with edges of $M^\star \setminus H$. If $e$ turns out to be one of the solid edges it {\it must} be taken into $\MatchingSkeleton(G_1)$; however, if it is one of the dotted edges, it might not be.}
\label{figure:mainproof}
\end{center}
\vskip -0.2in
\end{figure}
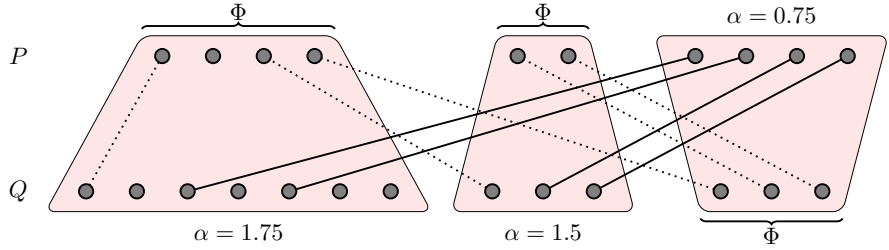

\section{Limitations of the matching skeleton	 coreset}\label{section:limitations}
In this section we show the limits of any matching skeleton as a randomized composable coreset for maximum matching by constructing a pathological bipartite graph on which it only preserves the maximum matching size to a factor of $2/3$.

\begin{theorem}
\label{theorem:main-upper}
For large enough $n$ and $k$ such that $k=O(n/\log n)$, $k=\omega(1)$, there exists a bipartite graph $G$ on $n$ vertices with maximum matching size $\mu$, for which the maximum matching size of
$$\widetilde G=\bigcup_{i=1}^k\MatchingSkeleton(G_i)$$
is at most $\mu\cdot(2/3+o(1))$ with high probability.
\end{theorem}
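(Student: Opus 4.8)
The plan is to exhibit an explicit ``pathological'' instance and analyse it. At a high level we want a bipartite graph $G=(P,Q,E)$ together with a fixed maximum matching $M^\star$, $|M^\star|=\mu$, such that: $G$ has an essentially trivial block decomposition, so $M^\star$ is compatible with \emph{its} matching skeleton; but the random thinning that produces each $G_i$ reliably \emph{fragments} the relevant part of this decomposition into a low-expansion block and a high-expansion block, in such a way that a $\Theta(1)$-fraction of the edges of $M^\star$ become cross-block edges of $G_i$ — and hence, since edges crossing two blocks cannot lie in any matching skeleton (cf.\ Figure~\ref{figure:coreset}), vanish from $\widetilde G$ — while the surviving edges cannot be re-assembled into a matching larger than $(2/3+o(1))\mu$. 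Concretely one takes $P$ and $Q$ each split into three groups of carefully chosen (and unequal) sizes, joined in a ``staircase'' pattern, with the crucial bipartite pieces made regular of degree $\Theta(k\cdot\mathrm{polylog}\,n)$: large enough that every piece of every $G_i$ concentrates under Theorem~\ref{theorem:chernoff} and the usual Hall-type expansion estimates for random bipartite graphs, yet small enough that $G_i$ is a genuinely thinned copy rather than one still carrying a near-perfect matching on each layer. The side lengths are chosen so that $G$ is unbalanced; this is the source of the $\mu/3$ worth of ``slack'' that the skeleton coreset throws away.

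\textbf{The steps.} (1) Compute $\mm{G}=\mu$ by exhibiting $M^\star$ and a vertex cover of the same size (Corollary~\ref{corollary:m-vc}). (2) The heart of the argument: show that with high probability the block decomposition of \emph{every} $G_i$ has the intended shape. For each candidate ``witness'' set $S$ one controls $|\Gamma_{G_i}(S)|$ via Theorem~\ref{theorem:chernoff} applied to the relevant edge count, and the union bound over all witnesses is affordable precisely because $\mu=\omega(k\log n)$; on the pieces the random partition does not disturb one invokes Lemma~\ref{lemma:robust} to pin down the block boundaries exactly. This yields, for each $i$, a low-expansion block and a high-expansion block whose boundary separates the two endpoints of a $(1/3)$-fraction of the edges of $M^\star$, so that those edges are cross-block in $G_i$ and therefore absent from every matching skeleton of $G_i$. (3) Record the canonical vertex cover $\Phi_i$ of $G_i$ (Lemma~\ref{lemma:canonicalvc}) and note that on the ``middle'' layer all the $\Phi_i$ coincide. (4) Assemble from these facts a single vertex cover $\Phi$ of $\widetilde G$ of size $(2/3+o(1))\mu$: because the middle-layer edges of $M^\star$ are gone, $\Phi$ may take the small side of each layer as dictated by the canonical covers without ``paying'' for the middle layer twice — this step also checks that step (2) produced no unintended low-expansion pocket that would enlarge $\Phi$. (5) Conclude by König's theorem (Theorem~\ref{thm:vc=mm}): $\mm{\widetilde G}\le|\Phi|\le(2/3+o(1))\mu$. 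Since matching skeletons are not unique, the bound in (4) must hold for every valid choice of the $H_i$; but the cover $\Phi$ only needs to cover $\bigcup_i\bigl(G_i$ with its cross-block edges deleted$\bigr)$, which is strictly smaller than $G$ and contains none of the lost middle-layer edges, so the skeleton choice is immaterial.

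\textbf{The main obstacle} is step (2): random edge-subsampling perturbs vertex expansion, and one must show it perturbs it in \emph{exactly} the planted way — the intended low- and high-expansion witnesses survive in every $G_i$, and no spurious low-expansion set appears that would push the vertex cover of $\widetilde G$ above $(2/3+o(1))\mu$. Tuning the parameters (group sizes, piece degrees, and the relation between $k$, $\mu$ and $n$) so that the concentration is strong enough for a union bound over all vertex subsets while the parts stay ``thin enough'' to fragment is the delicate point; Lemma~\ref{lemma:robust} and the hypothesis $\mu=\omega(k\log n)$ are the two levers that make this work. A secondary point to get right is that the $\mu/3$ of $M^\star$-edges that are deleted must genuinely be \emph{irreplaceable} in $\widetilde G$ — i.e.\ the staircase must be ``tight'', with no alternative routing of these edges through the surviving structure — which is exactly what forces the explicit size-$(2/3+o(1))\mu$ cover to exist.
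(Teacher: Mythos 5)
Your construction is, up to renaming, the one the paper uses (Figure~\ref{figure:zzgraph}): a three-layer staircase with the outer $Q_1$ and $P_3$ padded by an additive $\Theta(r/k)$, within-layer matchings that must fragment under subsampling, dense connectors that must not, and a final vertex-cover bound of $(2/3+o(1))\mu$ on $\widetilde G$. The gap is precisely the step you flag as the main obstacle: you never actually establish the block structure of the $G_i$, and the route you sketch for it is not the right tool. A union bound on $|\Gamma_{G_i}(S)|$ over all witness sets $S$ is exactly verifying Hall's condition set by set; the paper short-circuits this with Lemma~\ref{lemma:upper-lemma}, which says that exhibiting a \emph{single} perfect matching of one side into the other forces every expansion level of that induced subgraph to be $\ge 1$ (resp.\ $\le 1$). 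One then only needs Theorem~\ref{theorem:chernoff} (the degree-one $P_1\times Q_1$ edges fragment, freeing $r$ vertices of $Q_1$) and Theorem~\ref{theorem:BB} (the subsampled complete-bipartite connector between those free vertices and $P_2$ still has a perfect matching). Your parameterization is also internally inconsistent on this point: if every piece is regular of degree $\Theta(k\cdot\mathrm{polylog}\,n)$, then after thinning at rate $1/k$ \emph{every} layer still has degree $\omega(\log n)$ and retains a near-perfect matching, so nothing fragments and each $G_i$ inherits the trivial block structure of $G$. The outer within-layer pieces must have degree $1$ and the connectors must be dense; these two roles cannot be played by pieces of the same degree.

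The second issue is your mechanism for discarding the middle matching: you claim the $P_2\times Q_2$ edges become cross-block in $G_i$ and hence lie in \emph{no} matching skeleton. That requires $P_2$ and $Q_2$ to land in different blocks, which is not automatic: both halves of $G_i$ can attain expansion exactly $1$, and since the block decomposition takes the \emph{largest} minimizing set, $P_2$ and $Q_2$ may then be merged into a single $\alpha=1$ block, in which case some valid skeleton does use a $P_2\times Q_2$ edge. The paper proves only what the theorem needs (the skeleton choice is adversarial, per the remark following the theorem statement): take a skeleton $H$ of $G_i$ with the $P_2\times Q_2$ edges deleted, and apply Lemma~\ref{lemma:robust} with $E^+=E_i\cap(P_2\times Q_2)$ --- admissible because $\alpha(p)\ge 1\ge\alpha(q)$ --- to conclude that $H$ remains a valid skeleton of $G_i$. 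Your stronger ``every skeleton'' version would additionally require strict separation of the expansion levels across the middle, which your sketch does not arrange. Your concluding vertex-cover step (take $Q_1\cup P_3$, of size $2r+4r/k$, against $\mu=3r$) is fine once the middle edges are known to be absent from $\widetilde G$.
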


\begin{remark}
Note that here the high probability is over the randomness of the partition. The choice of the matching skeleton is considered to be adversarial in each subgraph, among the multiple possible valid choices.
\end{remark}

We begin by defining the graph $\bigraph$. The construction follows the ideas used in\cite{DBLP:conf/soda/AssadiBBMS19} to prove an upper bound on the performance of the maximum matching coreset. Let the vertex-set of $G$ consist of six parts: $P_1$, $P_2$, and $P_3$ make up $P$ on one side of the bipartition and $Q_1$, $Q_2$, and $Q_3$ make up $Q$ on the other side. Let the sizes of $P_1$, $P_2$, $Q_2$, and $Q_3$ be $r$, and let the sizes of $Q_1$ and $P_3$ be $r+2r/k$, where $r$ is some parameter such that $6r+4r/k=n$. The edge-set $E$ is comprised of the following:
\begin{itemize}
\item A perfect matching between all of $P_1$ and a subset of $Q_1$,
\item a complete bipartite graph between $Q_1$ and $P_2$,
\item a perfect matching between $P_2$ and $Q_2$,
\item a complete bipartite graph between $Q_2$ and $P_3$,
\item and a perfect matching between a subset of $P_3$ and all of $Q_3$.
\end{itemize}

The graph is pictured in Figure \ref{figure:zzgraph}. The analysis of the behavior of $\MatchingSkeleton$ on this graph relies on the observation that in a typical subsampled version, $P_1\cup Q_1\cup P_2$ forms a region of $\alpha$-value at least $1$ while $Q_2\cup P_3\cup Q_3$ forms a region of $\alpha$-value at most $1$. This means that the edges sampled between $P_2$ and $Q_2$ need not be taken into the matching skeleton, which further implies that $\widetilde G$ can be missing the entire $(P_2,Q_2)$ matching.\\

In order to prove this we will need the following basic property of expansion levels. One side of the lemma has been previously shown in \cite{DBLP:conf/soda/BernsteinHR18}.

\begin{lemma}
\label{lemma:upper-lemma}
Consider a bipartite graph $\bigraph$.
\begin{itemize}
\item If $P$ can be perfectly matched to $Q$, then $\min\alpha\ge1$.
\item Conversely, if $Q$ can be perfectly matched to $P$, then $\max\alpha\le1$.
\end{itemize}
\end{lemma}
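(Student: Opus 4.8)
The plan is to derive both bullets of Lemma~\ref{lemma:upper-lemma} directly from Hall's marriage theorem (an immediate corollary of Theorem~\ref{thm:vc=mm}) together with the two structural facts recorded in Lemma~\ref{lemma:blockproperties}: that the expansion levels satisfy $\alpha_1<\alpha_2<\dots<\alpha_k$ with $\alpha_i=|Q_i|/|P_i|$, and that $\Gamma(P_i)\subseteq\bigcup_{j\le i}Q_j$.

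For the first bullet, monotonicity of the $\alpha_i$ gives $\min\alpha=\alpha_1=\min_{\emptyset\neq S\subseteq P}|\Gamma(S)|/|S|$. If $P$ can be perfectly matched to $Q$, Hall's condition yields $|\Gamma(S)|\ge|S|$ for every nonempty $S\subseteq P$ (equivalently: a matching of size $|P|$ forces, by Theorem~\ref{thm:vc=mm}, a vertex cover of size $|P|$, and since $(P\setminus S)\cup\Gamma(S)$ is a vertex cover we get $|P\setminus S|+|\Gamma(S)|\ge|P|$). Hence $\alpha_1\ge1$, which is the claim. For the second bullet, monotonicity gives $\max\alpha=\alpha_k$, the expansion level of the last block, and $\alpha_k=|Q_k|/|P_k|$, so it suffices to prove $|Q_k|\le|P_k|$. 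The key step is the ``mirror image'' of the containment in Lemma~\ref{lemma:blockproperties}: by contraposition of $\Gamma(P_\ell)\subseteq\bigcup_{j\le\ell}Q_j$, any $q\in Q_m$ with $m>\ell$ lies outside $\Gamma(P_\ell)$ and thus has no neighbor in $P_\ell$; taking $m=k$ and letting $\ell$ range over $1,\dots,k-1$ shows $\Gamma(Q_k)\subseteq P_k$. Now if $Q$ can be perfectly matched to $P$, Hall's condition applied to $T=Q_k\subseteq Q$ gives $|\Gamma(Q_k)|\ge|Q_k|$, so $|P_k|\ge|\Gamma(Q_k)|\ge|Q_k|=\alpha_k|P_k|$, whence $\alpha_k\le1$.

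I expect no genuine obstacle here: the only non-immediate ingredient is the mirrored containment $\Gamma(Q_k)\subseteq P_k$, which is a one-line contrapositive of a fact already in Lemma~\ref{lemma:blockproperties}. If a uniform statement for all blocks is preferred, one can instead apply Hall to $T=\bigcup_{j\ge i}Q_j$, note $\Gamma(T)\subseteq\bigcup_{j\ge i}P_j$ by the same contraposition, and use $\alpha_j\ge\alpha_i$ for $j\ge i$ together with $\sum_{j\ge i}|Q_j|=\sum_{j\ge i}\alpha_j|P_j|$ to conclude $\alpha_i\le1$; but since $\max\alpha=\alpha_k$, handling $i=k$ already suffices. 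Both directions are otherwise routine applications of Theorem~\ref{thm:vc=mm}.
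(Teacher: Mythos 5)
Your proof is correct, but it takes a genuinely different route from the paper's. The paper argues globally: by Lemma~\ref{lemma:canonicalvc} and Theorem~\ref{thm:vc=mm}, the maximum matching size equals the size of the canonical vertex cover, i.e.\ $\sum_i |P_i|$ over blocks with $\alpha_i\ge1$ plus $\sum_i |Q_i|$ over blocks with $\alpha_i<1$; since $|Q_i|=\alpha_i|P_i|<|P_i|$ whenever $\alpha_i<1$, equality of this sum with $|P|$ forces every $\alpha_i\ge 1$, and symmetrically equality with $|Q|$ forces every $\alpha_i\le1$. You instead work locally with Hall's condition: your first bullet is essentially immediate, since $\alpha_1$ is \emph{defined} as $\min_{\emptyset\neq S\subseteq P}|\Gamma(S)|/|S|$ and Hall gives this is at least $1$ -- arguably cleaner than the paper's counting. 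Your second bullet needs the extra observation $\Gamma(Q_k)\subseteq P_k$, which you correctly obtain as the contrapositive of $\Gamma(P_\ell)\subseteq\bigcup_{j\le\ell}Q_j$ from Lemma~\ref{lemma:blockproperties}, and then Hall applied to $Q_k$ gives $|P_k|\ge|Q_k|=\alpha_k|P_k|$. Both arguments are short and rely only on facts already established; the paper's buys a symmetric one-line treatment of both bullets by reusing the canonical vertex cover machinery, while yours is more elementary and makes the first bullet a direct consequence of the construction. One cosmetic caveat: your mirrored containment is not literally stated in Lemma~\ref{lemma:blockproperties}, so it is a (trivial but genuine) additional step rather than a quotation.
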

\begin{proof}

By optimality of the canonical vertex cover (Lemma~\ref{lemma:canonicalvc}), and by Theorem~\ref{thm:vc=mm} we have that the size of the maximum matching is
$$\sum_{i=1}^k\begin{cases}|P_i|\text{ if $\alpha_i\ge1$}\\|Q_i|\text{ if $\alpha_i < 1$}\end{cases}.$$
In the first case of the lemma, $\mm{G}=|P|=\sum_{i=1}^k|P_i|$, therefore $\alpha_i$ must always be at least $1$. In the second case, $\mm{G}=|Q|=\sum_{i=1}^k|Q_i|$, therefore $\alpha_i$ must always be at most $1$.
\end{proof}

Finally, we state a result on perfect matchings in random bipartite graphs. This is a simplification, and direct result of Corollary 7.13 from \cite{PERFECT_MATCHING_IN_RANDOM}.

\begin{theorem}
\label{theorem:BB}
Let $H$ be a random bipartite graph on $n+n$ vertices, where each of the $n^2$ possible edges appears independently with probability $p=\Omega(\log n/n)$. Then $H$ contains a perfect matching with high probability.
\end{theorem}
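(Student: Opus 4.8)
The plan is to prove the statement from scratch via Hall's theorem (equivalently K\"onig's theorem, Theorem~\ref{thm:vc=mm}) together with a first-moment argument over the obstructions to a perfect matching, so that concentration enters only through the elementary inequality $(1-p)^m\le e^{-pm}$. Write the random graph as $H=(A,B,F)$ with $|A|=|B|=n$. If $H$ has no perfect matching, then its maximum matching has size at most $n-1$, so by Theorem~\ref{thm:vc=mm} it admits a vertex cover $C$ with $|C|\le n-1$; setting $S:=A\setminus C$ and $T:=B\setminus C$ produces a pair with no edge of $H$ between $S$ and $T$ and with $|S|+|T|=2n-|C|\ge n+1$. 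Exchanging the two sides if necessary we may assume $|S|\ge|T|$, hence $|S|\ge\lceil(n+1)/2\rceil$, and by discarding vertices from $T$ we may take $|S|+|T|=n+1$ exactly. Thus it suffices to bound the probability that for some $t\in\{1,\dots,\lfloor(n+1)/2\rfloor\}$ there are $T\subseteq B$ with $|T|=t$ and $S\subseteq A$ with $|S|=n+1-t$ spanning no edge of $H$.

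For a fixed such pair the probability of spanning no edge is $(1-p)^{(n+1-t)t}\le e^{-p(n+1-t)t}\le e^{-pnt/2}$ (using $n+1-t\ge n/2$ for $t\le(n+1)/2$), and there are $\binom{n}{n+1-t}\binom{n}{t}=\binom{n}{t-1}\binom{n}{t}\le\binom{n}{t}^2\le(en/t)^{2t}$ such pairs, so a union bound yields
\[
\Pr{H\text{ has no perfect matching}}\ \le\ \sum_{t=1}^{\lfloor(n+1)/2\rfloor}\left(\frac{en}{t}\right)^{2t}e^{-pnt/2}\ \le\ \sum_{t\ge1}\Big((en)^2e^{-pn/2}\Big)^{t}.
\]
When $p\ge C\log n/n$ we have $(en)^2e^{-pn/2}\le e^2n^{2-C/2}$, which is $o(1)$ as soon as the absolute constant $C$ exceeds $4$; for such $p$ the geometric series is therefore $o(1)$, which proves the claim. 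This already covers the use of the theorem in Section~\ref{section:limitations}, where the relevant edge probability is $1/k$ with $k$ at most a suitably small constant times $n/\log n$.

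The only real obstacle is pushing the hidden constant down to the true threshold: the $t=1$ term of the sum equals $n(1-p)^n$, which tends to $0$ precisely when $pn-\log n\to\infty$, reflecting that below $p=\log n/n$ the random graph has isolated vertices — so the hypothesis $p=\Omega(\log n/n)$ must be read with a hidden constant larger than $1$. To reach this sharp regime one cannot afford the crude count $\binom{n}{t-1}\binom{n}{t}$ at the smaller sizes; the standard fix is to union-bound not over all empty pairs but over \emph{minimal} Hall violators $S$, for which $|\Gamma(S)|=|S|-1$ and the bipartite graph induced on $S\cup\Gamma(S)$ is connected. Connectedness forces a spanning tree to be present, contributing an extra factor of the form (number of labelled spanning trees of $K_{|S|,|S|-1}$)$\cdot\,p^{2|S|-2}$ that tames the binomial coefficients for small $|S|$, while for larger $|S|$ one reverts to the plain bound or to $\binom{n}{\cdot}\le2^n$; balancing these three size ranges is the delicate book-keeping carried out in Corollary~7.13 of \cite{PERFECT_MATCHING_IN_RANDOM}. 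Since our application does not need the threshold-optimal constant, I would present the short geometric-series argument above as the self-contained proof and cite \cite{PERFECT_MATCHING_IN_RANDOM} for the sharp statement.
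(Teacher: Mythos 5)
Your argument is correct, but it takes a genuinely different route from the paper, which offers no proof of Theorem~\ref{theorem:BB} at all and simply invokes Corollary~7.13 of \cite{PERFECT_MATCHING_IN_RANDOM}; you give a self-contained first-moment proof instead. Your reduction via K\"onig's theorem (Theorem~\ref{thm:vc=mm}) to the existence of an empty pair $(S,T)$ with $|S|+|T|=n+1$, followed by the union bound $\sum_t \binom{n}{t-1}\binom{n}{t}(1-p)^{(n+1-t)t}$, is the standard elementary route, and all the estimates check out (the only cosmetic omission is a factor of $2$ for the two possible orientations of ``the larger side,'' which is absorbed by the $o(1)$). The trade-off is exactly the one you identify: your geometric-series bound needs the hidden constant in $p=\Omega(\log n/n)$ to exceed $4$, while the cited result reaches the sharp threshold $p=(\log n+\omega(1))/n$ via the minimal-Hall-violator refinement you sketch. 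You are also right that the theorem as literally stated is false when the hidden constant is below $1$ (isolated vertices appear with high probability), so some assumption on the constant is unavoidable in any reading; and since the theorem's only use, in the proof of Theorem~\ref{theorem:main-upper}, takes $p=1/k$ with $k=O(n/\log n)$ for a freely chosen implied constant, your weaker constant suffices there. Presenting your short argument and citing \cite{PERFECT_MATCHING_IN_RANDOM} only for the sharp version, as you propose, would make the paper more self-contained and would make explicit the currently implicit constraint on the constant.
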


We are ready to prove Theorem~\ref{theorem:main-upper}.

\begin{proofof}{Theorem~\ref{theorem:main-upper}}

Consider $G_i=(P,\, Q,\, E_i)$, the graph $G$ sub-sampled at rate $1/k$. We claim that with high probability the non-isolated vertices of $P_1\cup P_2$ can be perfectly matched to $Q_1$. Indeed, out of $r$ edges of $P_1\times Q_1$,  we expect $r/k$ of them to  appear in $G_i$ and with high probability no more than $2r/k$ do (see Theorem~\ref{theorem:chernoff}). In this case, at least $r$  unmatched vertices of $Q_1$ remain, which we will call $Q_1'$. Note that the graph between $Q_1'$ and $P_2$ follows the same distribution as the random graph described in Theorem~\ref{theorem:BB}, with $p=1/k=\Omega(\log n/n)$. Therefore, $(Q_1'\times P_2)\cap E_i$ contains a perfect matching with high probability.\\

By Lemma~\ref{lemma:upper-lemma}, this means that the subgraph induced by $P_1\cup Q_1\cup P_2$ in $G_i$ has block decomposition with all $\alpha\ge1$. By similar reasoning we can show that the non-isolated vertices of $Q_2\cup Q_3$ can be perfectly matched to $P_3$. Hence, by Lemma~\ref{lemma:upper-lemma}, the induced subgraph of $Q_2\cup P_3\cup Q_3$ in $G_i$ has a block decomposition with all $\alpha\le1$.\\

Simply taking the disjoint union of these two induced subgraphs does not change the expansion levels. Hence the graph $G_i^-$, consisting of all edges of $G_i$ {\it except those between $P_2$ and $Q_2$}, has block decomposition with the $\alpha$ values of $P_1$, $Q_1$, and $P_2$ being at least $1$, and the $\alpha$ values of $Q_2$, $P_3$ and $Q_3$ being at most $1$. Let $H$ be a matching skeleton of $G_i^-$. By applying Lemma~\ref{lemma:robust} with $E^-=\emptyset$ and $E^+=E_i\cap P_2\times Q_2$, we get that $H$ is still a matching skeleton of $G_i$. Therefore, there exists a matching skeleton of $G_i$ which contains no edges from $P_2\times Q_2$.\\

In conclusion, it is possible that each coreset selects a matching skeleton of its sub-graph containing no edges from $Q_2 \times P_2$. In such case, the maximum matching of $\widetilde{G}$ has size at most $2r + 4r/k$, whereas that of $G$ was $3r$.
\end{proofof}

\begin{remark}
With a simple alteration to the proof, it can be shown that this upper bound holds even when the individual matching skeletons are selected arbitrarily.
\end{remark}

\begin{figure}[ht]
\vskip 0.2in
\begin{center}
\scalebox{0.9}{
\begin{tikzpicture}
\path[draw, thick] (-0.30, 0) -- (-0.30, -4);
\path[draw, thick] (-0.15, 0) -- (-0.15, -4);
\path[draw, thick] (0, 0) -- (0, -4);
\path[draw, thick] (0.15, 0) -- (0.15, -4);
\path[draw, thick] (0.30, 0) -- (0.30, -4);

\path[draw, thick] (4.70, 0) -- (4.7, -4);
\path[draw, thick] (4.85, 0) -- (4.85, -4);
\path[draw, thick] (5, 0) -- (5, -4);
\path[draw, thick] (5.15, 0) -- (5.15, -4);
\path[draw, thick] (5.30, 0) -- (5.30, -4);

\path[draw, thick] (9.70, 0) -- (9.7, -4);
\path[draw, thick] (9.85, 0) -- (9.85, -4);
\path[draw, thick] (10, 0) -- (10, -4);
\path[draw, thick] (10.15, 0) -- (10.15, -4);
\path[draw, thick] (10.30, 0) -- (10.30, -4);

\draw[fill=gray, thick, rotate around={-40:(2.5, -2)}] (-1,-1.6) rectangle node[rotate=-40] {complete bipartite} (6,-2.4);

\draw[fill=gray, thick, rotate around={-40:(7.5, -2)}] (4, -1.6) rectangle node[rotate=-40] {complete bipartite} (11,-2.4);

\path[draw, dashed] (-2, -1.4) -- (0.5, -1.4);
\path[draw, dashed] (-2, -2.6) -- (0.5, -2.6);
\node[rectangle, align=center, minimum width=2, minimum height=1] at (-1.3, -2) {matching\\ of size $n$};

\node[cloud, cloud puffs=13, draw, thick, fill=data_color, minimum width=2cm, minimum height=2cm, align=center] (t1) at (0, 0) {$P_3$ \\ $r+ \frac{2r}{k}$};
\node[cloud, cloud puffs=13, draw, thick, fill=data_color, minimum width=2cm, minimum height=2cm, align=center] (t2) at (5, 0) {$P_2$ \\ $r$};
\node[cloud, cloud puffs=13, draw, thick, fill=data_color, minimum width=2cm, minimum height=2cm, align=center] (t3) at (10, 0) {$P_1$ \\ $r$};

\node[cloud, cloud puffs=13, draw, thick, fill=data_color, minimum width=2cm, minimum height=2cm, align=center] (b1) at (0, -4) {$Q_3$ \\ $r$};
\node[cloud, cloud puffs=13, draw, thick, fill=data_color, minimum width=2cm, minimum height=2cm, align=center] (b2) at (5, -4) {$Q_2$ \\ $r$};
\node[cloud, cloud puffs=13, draw, thick, fill=data_color, minimum width=2cm, minimum height=2cm, align=center] (b3) at (10, -4) {$Q_1$ \\ $r+ \frac{2r}{k}$};
\end{tikzpicture}
}
\caption{The pathological graph. A typical sub-sampling of the graph has a matching skeleton that does not contain any edges of $Q_2 \times P_2$.}
\label{figure:zzgraph}
\end{center}
\vskip -0.2in
\end{figure}
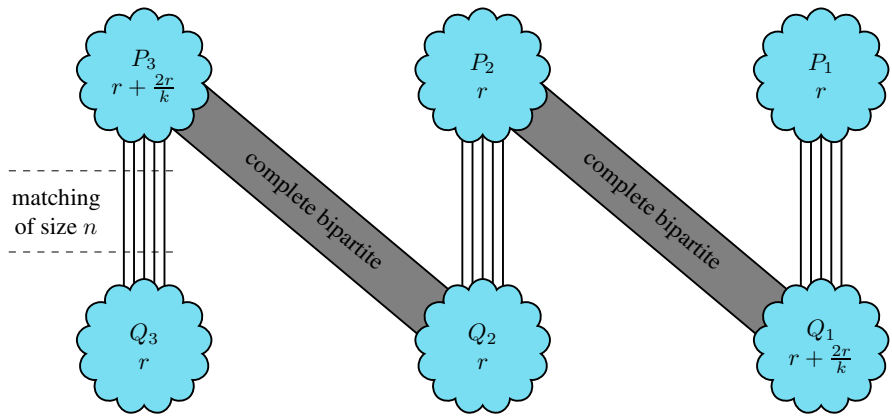

\section*{Acknowledgement}
This project has received funding from the European Research Council (ERC) under the European Union's Horizon 2020 research and innovation programme (grant agreement No 759471)

\newcommand{\etalchar}[1]{$^{#1}$}

\end{document}